  \providecommand\BibTeX{{%
    \normalfont B\kern-0.5em{\scshape i\kern-0.25em b}\kern-0.8em\TeX}}}
\newcommand{\fn}[1]{\textcolor{pink}{#1}}
\newcommand{\draco}[1]{\textcolor{black}{#1}}
\newcommand{\jinshu}[1]{\textcolor{black}{#1}}
\newcommand{\eat}[1]{}
\newtheorem{lemma}{Lemma}
\newcommand{\x}{\textbf{x}}
\newcommand{\y}{\textbf{y}}
\newcommand{\X}{\textbf{X}}
\newcommand{\nethist}{{\sc Network-Construct-Histo}\xspace}
\newcommand{\netreal}{{\sc Network-Construct-RealTime}\xspace}
\newcommand{\preproc}{{\sc Preprocessing}\xspace}
\newcommand{\netapprox}{{\sc Network-Approximate}\xspace}
\newcommand{\pruning}{{\sc Pruning}\xspace}
\newcommand{\name}{\texttt{TSUBASA}\xspace}
\algrenewcommand\algorithmicrequire{\textbf{Input:}}
\algrenewcommand\algorithmicensure{\textbf{Output:}}
\newcommand{\squishlist}{
 \begin{list}{$\bullet$}
  { \setlength{\itemsep}{0pt}
     \setlength{\parsep}{1pt}
     \setlength{\topsep}{1pt}
     \setlength{\partopsep}{0pt}
     \setlength{\leftmargin}{1em}
     \setlength{\labelwidth}{1em}
     \setlength{\labelsep}{0.5em} } }
\newcommand{\squishend}{\end{list}
}
\begin{document}

\title{TSUBASA: Climate Network Construction\\ on Historical and Real-Time Data}

\author{Yunlong Xu}
\authornote{Both authors contributed equally to this research.}
\orcid{}
\author{Jinshu Liu}
\authornotemark[1]
\affiliation{
  \institution{}
  \streetaddress{}
  \city{University of Rochester}
  \state{}
  \country{}
  \postcode{}
}
\email{yxu103@u.rochester.edu}
\email{jliu158@ur.rochester.edu}

\author{Fatemeh Nargesian}
\affiliation{%
  \institution{University of Rochester}
  \streetaddress{}
  \city{}
  \country{}}
  \email{fnargesian@rochester.edu}

\renewcommand{\shortauthors}{Xu, Liu, and Nargesian}

\renewcommand{\shorttitle}{TSUBASA: Climate Network Construction on Historical and Real-Time Data}

\begin{abstract}
    A climate network represents the global climate system by the interactions of a set of anomaly time-series. Network science has been applied on climate data to study the dynamics of a climate network. 
    The core task and first step to enable interactive network science on climate data is the efficient construction and update of a climate network on user-defined time-windows. We present \name, an algorithm for the efficient  construction of climate networks based on the exact calculation of  Pearson's correlation of large time-series. By pre-computing simple and low-overhead  statistics, \name can efficiently compute  the exact pairwise correlation of time-series on arbitrary time windows at query time. For real-time data, \name proposes a fast and incremental way of updating a network at interactive speed. 
    Our experiments show that \name is faster than  approximate solutions at least one order of magnitude for both historical and real-time data and outperforms a baseline for time-series correlation calculation up to two orders of magnitude. 
\end{abstract}

\begin{CCSXML}
<ccs2012>
 <concept>
  <concept_id>10010520.10010553.10010562</concept_id>
  <concept_desc>Computer systems organization~Embedded systems</concept_desc>
  <concept_significance>500</concept_significance>
 </concept>
 <concept>
  <concept_id>10010520.10010575.10010755</concept_id>
  <concept_desc>Computer systems organization~Redundancy</concept_desc>
  <concept_significance>300</concept_significance>
 </concept>
 <concept>
  <concept_id>10010520.10010553.10010554</concept_id>
  <concept_desc>Computer systems organization~Robotics</concept_desc>
  <concept_significance>100</concept_significance>
 </concept>
 <concept>
  <concept_id>10003033.10003083.10003095</concept_id>
  <concept_desc>Networks~Network reliability</concept_desc>
  <concept_significance>100</concept_significance>
 </concept>
</ccs2012>
\end{CCSXML}

\eat{
\ccsdesc[500]{Computer systems organization~Embedded systems}
\ccsdesc[300]{Computer systems organization~Redundancy}
\ccsdesc{Computer systems organization~Robotics}
\ccsdesc[100]{Networks~Network reliability}
}

\keywords{time-series, climate data, climate networks, correlation matrix}

\maketitle

\section{Introduction}
\label{sec:intro}

{To identify and analyze patterns in global climate, 
scientists and climate risk analysts model climate data as complex networks -- networks with non-trivial topological properties~\cite{abe2004scale,kim2019complex,gozolchiani2008pattern}.}  
The  climate network architecture  represents the global climate system by a set of 
anomaly time-series (departure from the usual behavior) of gridded climate data and their interactions~\cite{Tsonis2004}. 
A climate data set includes remote and in-situ sensor measurements (e.g. sea surface temperature and sea level pressure) covering a grid (e.g.  with a resolution of $2.5^{\circ}\times 2.5^{\circ}$). 
Nodes in a  climate network are geographical locations, characterized by time-series and edges represent information flow between nodes. The 
edge weights indicate a degree of correlation between the behaviors of time-series (e.g. Pearson's correlation). Note the geographical locality of nodes does not directly imply the topology of a network.  

{Several studies have applied network science on climate data assuming dynamic networks that are changing  with real-time data~\cite{Berezin2012}.   
Climate networks have been shown to be powerful tools for gaining insights on earthquakes~\cite{abe2004scale}, rainfalls~\cite{kim2019complex}, and global climate events such as El Niño~\cite{gozolchiani2008pattern}. } 

{The common way for network dynamics analysis is to construct networks for each hypothesized time-window and analyze them separately~\cite{Faghmous:2014}. 
Figure~\ref{fig:application} shows the steps of constructing a climate network. Given a query window provided by a user, a correlation matrix is constructed by computing the pairwise correlation of all time-series on the query window. Pearson’s correlation is one of the most dominant measures for studying the pairwise climatical correlation~\cite{donges2009complex}. The correlation matrix enables  visualization ~\cite{nocke2015visual}, network dynamics analysis~\cite{Berezin2012}, as well as tasks such as community detection ~\cite{tantet2014interaction}. To analyze the topology of the network, a user-provided correlation threshold can be applied on the matrix to find the significant edges between nodes and obtain a boolean network matrix. 
From the mathematical perspective, the analytical computation of the evolution of a complex system (or even not so complex such as Ordinary Differential Equation systems), depends on the robustness and correctness of the initial weights in the complex network~\cite{gozolchiani2008pattern}. 
}

{The core task in network construction is the problem of  large-scale all-pair time-series correlation calculation. The key challenges of interactive network analysis include: 1) exact calculation of the complete correlation matrix, 2) correlation calculation on time-windows of arbitrary size, and 3) efficiency of network construction and update  for historic and real-time data to achieve interactivity.}

\begin{figure}[!t]
    \centering
    \includegraphics[width=\linewidth]{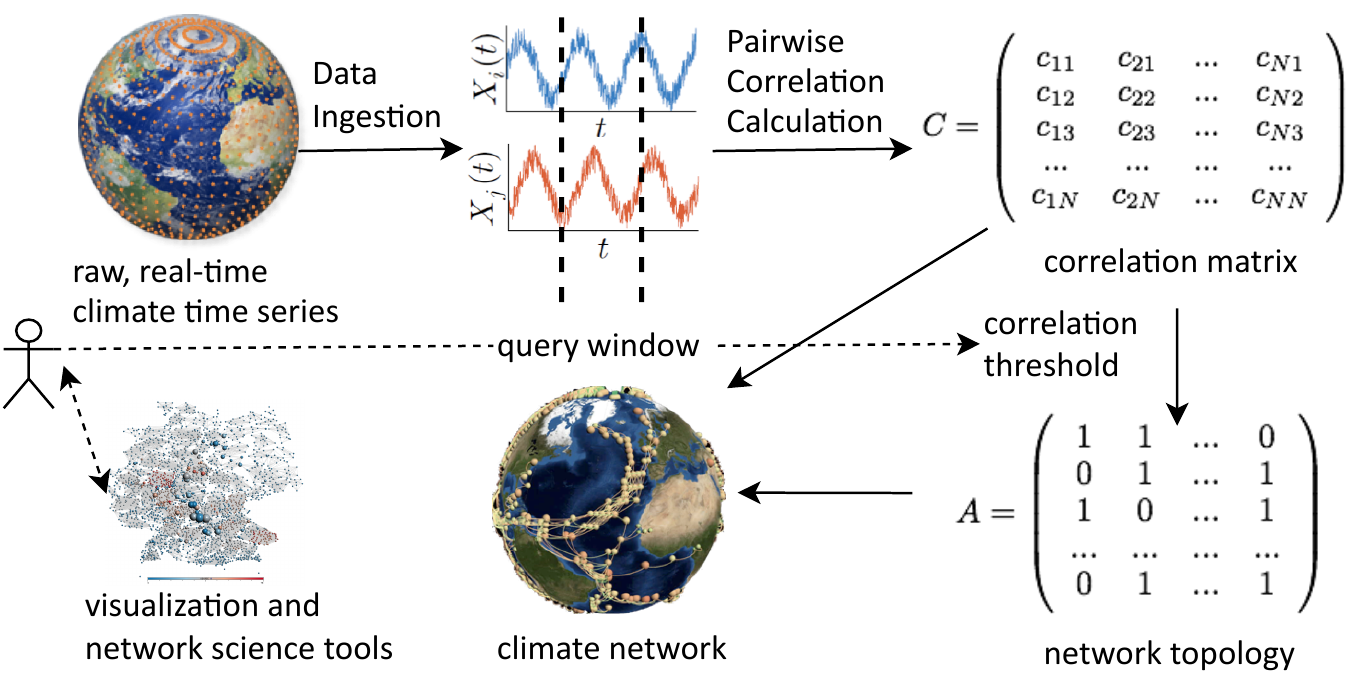}
    \vspace{-5mm}
    \caption{{Climate Network Construction.}}
    \label{fig:application}
\end{figure}

{The line of data management research that computes  networks on time-series (for example, for stock market data or climate data) apply pruning techniques on the approximation of correlation measures. 
In particular, StatStream~\cite{ZhuS02}  and MASS~\cite{MueenNL10} reduce the correlation of time-series to the distance  of their Discrete Fourier Transform (DFT) coefficients and propose grid-based indexing~\cite{ZhuS02} 
and I/O-aware techniques~\cite{MueenNL10} for performing   threshold-based correlated time-series search. 
The accuracy of the network can be increased by considering a very large number of DFT coefficients that are expensive to compute. 
a query window size is restricted to be an integral multiple of the basic window size, which limits the usability of these algorithms. }
{In this paper, we present \name a framework for efficient construction and update  of the correlation matrix  for arbitrary query windows on historical and real-time data. The differences of \name and existing work are three-fold. First, unlike these algorithms that only identify edges  with a correlation higher than a threshold, \name computes and updates the complete  correlation matrix. This enables network dynamics analysis as well as choosing arbitrary thresholds at query time.  Second, \name computes the exact correlation between time-series. Finally, unlike the existing work, \name allows arbitrary query window sizes.    
\eat{We present the mathematics for efficiently computing all-pair correlations on arbitrary query window sizes by relying  on pre-computed sketches collected 
by one pass of data. 
The sketches include statistics that are pre-computed by one pass over data and can be reused at query time for constructing networks for  arbitrary queries.  
Moreover, using these sketches,  
the correlation matrix can be 
updated as new data points arrive real time, in an incremental way, without recomputing the correlation from scratch.} 
} 
In this paper, we make the following contributions. 
{
\squishlist
\item We present the mathematical tools for the exact calculation of pairwise Pearson's correlation of  time-series using the basic window model. 
\item \name relies on simple statistics of basic windows pre-computed by doing one pass over the whole data. This   provides a flexible and highly responsive correlation calculation mechanism. Users can obtain a correlation matrix given any query window  without computing the correlation statistics repeatedly. 
\item We propose an incremental solution for real-time update of the  correlation matrix and climate network.  Relying on the easy-to-compute  statistics of basic windows means the correlations can be updated quickly for frequently-updated time-series.  
\item We enable queries with arbitrary time-window size and start and end point on both historic and real-time by relaxing the  restriction of the existing basic window model on a query window size being an integral multiple of basic window size.
\squishend
}

\begin{figure*}[!ht]
    \centering
    \includegraphics[width=\textwidth]{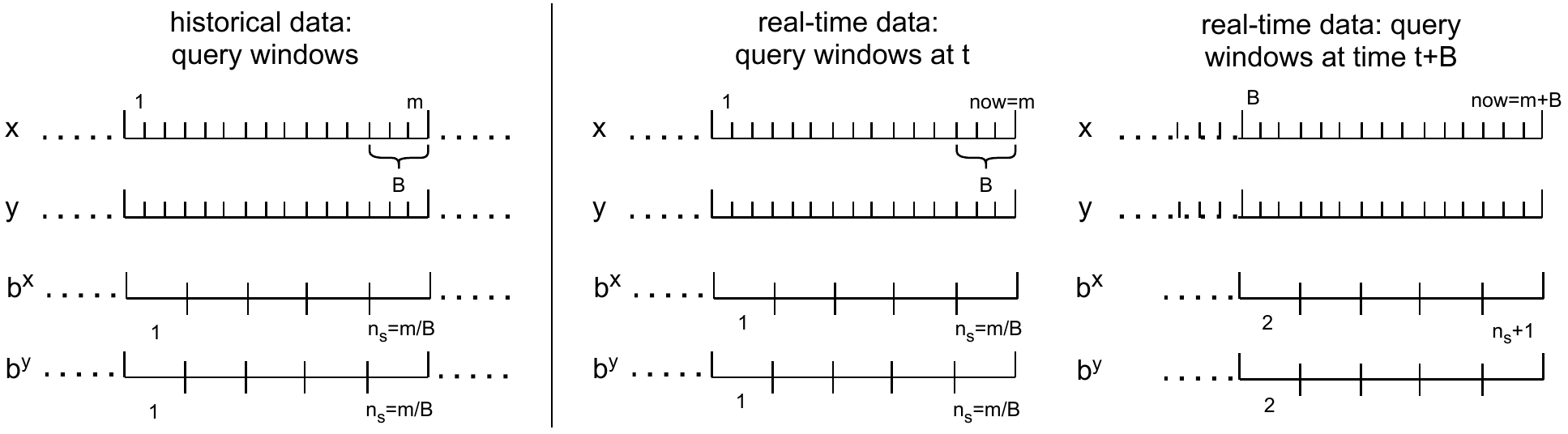}
    \vspace{-3mm}
    \caption{Query Windows for Historical and Real-time Data.}
    \label{fig:windows}
\end{figure*}

\section{Preliminaries} 
\label{sec:prelim}

We start by introducing the climate network  construction problem, then, give an overview of existing approximate techniques for calculating  time-series correlation. 

\subsection{Climate Network} 

We are given a collection $\mathcal{L}=\{x^1, \ldots, x^n\}$ of geo-labeled time-series, where $x^i$ 
denotes the time-stamped values of a climatic variable collected at location $i$. A time-series $x^i$ is defined as 
$[\x^i_1, \ldots, \x^i_m]$, where $\x^i_j$ is the observed value at time 
$j$. 
We assume all  time-series in $\mathcal{L}$ are synchronized, i.e. each time-series has a  value available at every periodic  time interval, namely time resolution. 
Particularly, if the time resolution is $\gamma$ and the current timestamp is $j$, every $x^i$ is $\mathcal{L}$ will have a value observed at time $j+\gamma$. If an $x^i$ has missing value at $j$, 
a value is interpolated or if multiple values appear between $j$ and 
$j+\gamma$, an aggregate value is assigned. 
Table~\ref{tbl:notations} shows a list of notations used throughout this paper. 

Now, we turn our attention to user's  query. 
At query time, a user defines a query time-window $w=(e,l)$ and a correlation threshold $\theta$. The query window is defined with an end timestamp $e$ and a length $l$ that indicates a sub-sequence  of size $l$ in a time-series with a start timestamp $e-l+1$ and an end timestamp $e$. For real-time data, the end timestamp can be the last observed time, i.e. a user query on real-time data is $w=(``now", l)$, that means the network is constructed on the last $l$ observed data points. 
We consider the data points within $w$ for each time-series  $x=[\x_1, \ldots, \x_k]$. 
For example, $[\x_{k-m+1}, \ldots, \x_m]$ is the sequence we consider for $x$  on 
the query window $w=(k,m)$. 
When clear from the context, we call the sequence of a time-series $x$, for a given  query window simply query window or time-series $x$.  
The climate network  of $\mathcal{L}$ for a  given query time-window 
$w$ is a graph $\mathcal{N}=(G,V)$, 
where a node in $G$ corresponds to a location $i$ and is represented 
by time-series $x^i$. 
An edge in $V$ between nodes $i$ and $l$ indicates that the correlation between time-series $x^i$ and $x^l$ is above a user-defined threshold $\theta$.   
In this paper, we focus on the most commonly used correlation measure 
i.e. Pearson’s correlation coefficients. 
Given query windows  $x=[\x_1, \ldots, \x_m]$ and $y=[\y_1, \ldots, \y_m]$, with means $\bar{x}$ and $\bar{y}$, the Pearson's correlation of $x$ and $y$ is calculated as follows.

\begin{equation}\label{eq:corr}
{\displaystyle Corr(x, y) = {\frac {\sum _{i=1}^{m}(\x_i-{\bar{x}})(\y_i-{\bar {y}})}{{\sqrt {\sum _{i=1}^{m}(\x_i-{\bar {x}})^{2}}}{\sqrt {\sum _{i=1}^{m}(\y_i-{\bar {y}})^{2}}}}}}
\end{equation}

Constructing a network for a query window (e.g. the first six months of 2021)   requires computing the correlation for all  pairs of time-series and pruning  links using  threshold $\theta$. 
Existing techniques for fast all-pair  correlation calculation 
on large time series approximate pairwise correlation by relying on the Fourier transform~\cite{MueenNL10,ZhuS02,ColeSZ05}. 
In this paper, we present an efficient way of calculating the exact correlation of 
time-series and constructing a network for historical data and updating the network for real-time data. 
The existing work divide time-series into cooperative and uncooperative to perform correlation approximation. 
Although our core contribution is the exact calculation and the update of correlations, we also present an approximate way of calculating correlation for generic time-series (\S~\ref{sec:approx}).  
We start by giving an overview of correlation approximation.  

\subsection{Correlation Approximation Solutions} 
\label{sec:bg}

Computing the correlation of large-scale time-series is pervasively done using the notion of basic windows~\cite{ZhuS02,MueenNL10}. Time-series are processed in batches of size $B$, i.e. the stream $[{\textbf x}_1, \ldots, {\textbf x}_n]$ is equally divided into $n/B$ basic windows, 
where the $j$-th basic window contains data  $[{\textbf x}_{(j-1)*B}, \ldots, {\textbf x}_{j*B}]$. Similarly, a query window is a sequence of basic windows. A query window is assumed to be divisible by the size of basic window. 
Figure~\ref{fig:windows} presents a visualization of query and basic windows. 
The existing techniques {\em approximate}  correlation using the Discrete Fourier Transform (DFT) of basic windows.   
The DFT of a time-series $x=[\x_1, \ldots, \x_k]$ is a sequence $X=[\X_1, \ldots, \X_k]$ of complex numbers: 
\begin{equation}\label{eq:dft}
\X_f = \frac{1}{\sqrt{k}}\sum_{i=1}^{k} \x_ie^\frac{-j2\pi f i}{k}, f=1,\ldots,k
\end{equation}
where $j = \sqrt{-1}$. Computing DFT coefficients has time  complexity $O(n^2)$ in the size of basic  window. 
\draco{For normalized time series,} DFT preserves the Euclidean distance between two sequences, that is, 
\draco{$Dist(\hat{x},\hat{y}) = Dist(\hat{X},\hat{Y})$}. The approximation  techniques consider the first few DFT coefficients to capture the shape and properties of time-series. 
It has been shown that the correlation of two time-series 
can be reduced to the Euclidean distance of the the DFT coefficients of their normalized time-series~\cite{Rafiei99,ZhuS02}. 
The normalization of a basic window $x_i = [\x_1, \ldots, \x_{B}]$ is $\hat{x_i} = [\frac{\x_1-\overline{x_i}}{\sigma_i},\ldots,\frac{\x_B-\overline{x_i}}{\sigma_i}]$, where $\overline{x}_i$ and $\sigma_i$ are the mean and standard deviation of $x_i$. The correlation of two time-series can be obtained from the Euclidean distance $d(.,.)$ of their normalized series. 
\begin{equation}\label{eq:distcorr}
c_i = 1 - \frac{1}{2}d^2(\hat{x_i},\hat{y_i})
\end{equation}

For more concise notation, we denote $d_i$ to be $d(\hat{x_i},\hat{y_i})$. 
Suppose $\hat{X}_i$ and $\hat{Y}_i$ are the DFT of normalized basic windows $\hat{x}_i$ and $\hat{y}_i$, 
and $Dist_n(\hat{X}_i,\hat{Y}_i)$ is the Euclidean distance of the first $n$ DFT coefficients in $\hat{X}_i$ and $\hat{Y}_i$. Recall DFT preserves the distance between coefficients and the original time-series. Therefore, $d_i\simeq Dist_n(\hat{X}_i,\hat{Y}_i)$. The more coefficients are used (the higher $n$), the more accurate the distance and correlation becomes. 
So far, we have a way of computing the distance of basic windows. To compute the distance of query windows, $Dist_n(x,y)$, 
the existing techniques assume that the form and properties of time-series do not drastically change over a query window, i.e. basic windows have similar statistics (mean and standard deviation) to the  query window~\cite{ZhuS02,MueenNL10}. 
When the statistics do not change, $Dist_n(x,y)$ is the average of the $d_i$ on all basic windows of $x$ and $y$. 
In \S~\ref{sec:approx}, we relax this assumption and consider time series that change in form and properties over a query window, 
i.e. the statistics of basic windows are not necessarily similar to each other and the query window. 
Now, we apply Equation~\ref{eq:distcorr} on time windows, to get $Corr(x,y) \simeq 1 - \frac{1}{2}Dist_n(x,y)$. Again the higher $n$ we use, the better approximation of correlation we obtain.  

Now, we describe how $Dist_n(\hat{X},\hat{Y})$ is used to decide whether $Corr(x,y)\geq\theta$. 
Zhu and Shasha show the following relationship between correlation and the distance based on $n$ DFT coefficients of $\hat{X}$ and $\hat{Y}$. 
\begin{equation}\label{eq:corrdistthreshold}
Corr(x,y) \geq 1-\epsilon^2 \Rightarrow Dist_n(\hat{X},\hat{Y}) \leq \epsilon
\end{equation}
When using approximate techniques for climate network construction,  to get the pairs of time-series with  $Corr(x,y)\geq\theta$, 
we can compute $\epsilon = \sqrt{1-\theta}$. This allows us to prune pairs with condition $Dist_n(\hat{X},\hat{Y}) \leq \sqrt{1-\theta}$. 

Using Equation~\ref{eq:corrdistthreshold}, we get a superset of highly correlated time-series with no false negatives. As we show in Figure~\ref{fig:accuracy}, the false positives  incur  spurious edges in the network and result in an inaccurate network. These false positive edges can only be filtered at the cost of exact correlation calculation from the raw data. 
To avoid  false positives,  
\name calculates exact correlations of time-series, even faster than approximation. 

Applying Equation~\ref{eq:corrdistthreshold} requires normalizing time-series and calculating DFT coefficients. When using DFT-based approximation, the accuracy of network increases as more DFT coefficients are considered. 
Indeed, approximate techniques consider very few coefficients (two in the case of StatStream~\cite{ZhuS02} for any basic window size). 
However, when dealing with climate data sets, which are uncooperative time-series, the majority of coefficients are needed to get near accurate results (Figure~\ref{fig:accuracy}). Statstream proposes random projection for uncooperative 
time-series that similar to DFT coefficient calculation approximates correlation and has high overhead. 
To overcome modeling uncooperative time-series, Qiu et al. use Fourier transform
and neural network to embed time series into a low-dimensional Euclidean space~\cite{qiu2018learning}. 
The search is done using a nearest neighbor search  index in the embedding space. 
In this paper, we choose a different approach and compute exact correlation and network. Our solution, \name, relies on 
simple statistics of basic windows, and as we show empirically is much faster than approximation techniques. In particular, \name can update a network constructed on real-time data more efficiently than approximation techniques. This enables interactive network analysis on accurate networks for historical and real-time climate data.

\section{Network Construction} 

\begin{table}
\caption{Table of Notations}\vspace{-4mm}
\begin{tabular}{ l  l}
\hline
Symbol & Description \\
\hline
$x$ & a query window $[\x_1, \ldots, \x_m]$ of stream $\x$\\
$\x_i$ & data value at time $i$ in a query window $x$\\
$\overline{x}$ & mean of $x$\\
$x_j$ & the $j$-th basic window of $x$\\
$\overline{x_j}$ & mean of the $j$-th basic window of $x$\\
$\overline{x_{i:j}}$ & mean of basic windows $x_i, \ldots, x_j$ of $x$\\
$n_s$ & number of basic windows in a query window \\
{$ B $ }& { number of data  points in a basic window }\\
$\sigma_{xj}$ & standard deviation of the $j$-th basic window of $x$ \\
$c_j$ & correlation of $x$ and $y$ on the $j$-th basic window \\
 \hline
\end{tabular}
\label{tbl:notations}
\end{table}

Before a deep dive into exact correlation calculation, we present a high-level overview of \name's end-to-end framework. Figure~\ref{fig:arch} illustrates the components of \name for constructing and updating a climate network on historical and real-time data. 
The data storage contains a collection of frequently updated time-series accessible through locations. 
During the pre-processing, every time-series is divided into basic windows. We sketch  basic windows of time-series, in one pass, and store the collected statistics. This can be done at data ingestion time. 
At query time, the statistics of the basic windows corresponding to a given query window of all time-series are retrieved and all-pair correlation is calculated without the need to access the raw data. 
For real-time data, the system constructs the initial network and  ingests the real-time raw data in chunks of size $B$. The sketching of the newly ingested basic window is done on the fly and the correlations of time-series are updated incrementally without 
computing the correlation from scratch. 
In the following sections, we describe the details of data sketching as well as the mathematical model for the exact and approximate calculation of correlation. 

\begin{figure}[!t]
    \centering
    \includegraphics[width=\linewidth]{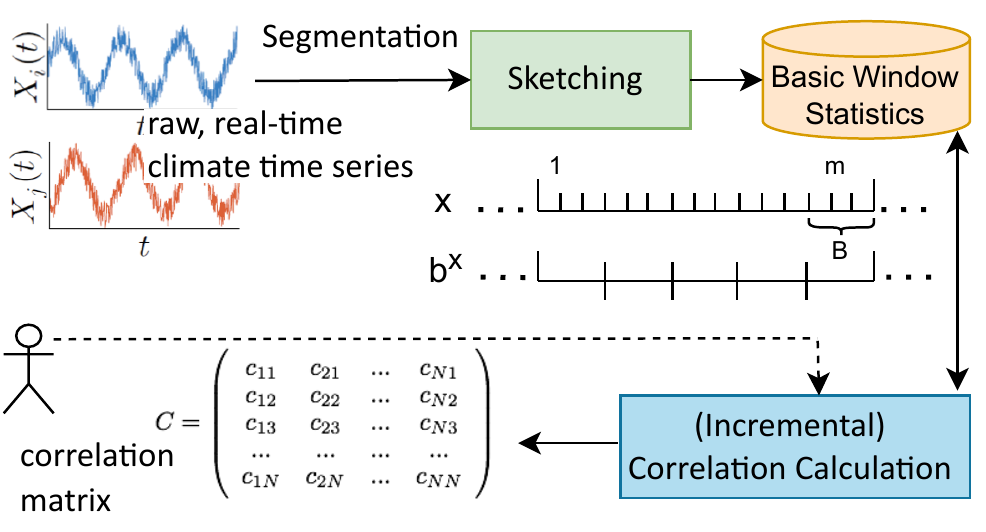}
    \vspace{-5mm}
    \caption{Architecture of \name. {update figure with parallel info.}}
    \label{fig:arch}
\end{figure}

\subsection{Exact Pairwise Correlation}
\label{sec:exact}

\subsubsection{Historical Data} 

\begin{algorithm}[t]
\caption{\preproc}\label{alg:preproc}
\begin{algorithmic}[1]
\Require {streams $\mathcal{L}=\{\x^1, \ldots, \x^n\}$; basic window size $B$}
\Ensure {statistics $S$}
\State $n_s\gets {\bf Len}(\x^1)/B$
\State $S\gets \{\}$
\For{$x, y\in\mathcal{L}$}
    \State $x\gets {\bf BasicWin}(\x,B)$; $y\gets {\bf BasicWin}(\y,B)$
    \For{$j\in [1..n_s]$}
        \State $S_{x_j}\gets {\bf Stats}(x_j)$; $S_{y_j}\gets {\bf Stats}(y_j)$
        \State $c_{j}\gets {\bf Corr}(x_j,y_j)$
        \State $\hat{x}_j\gets {\bf Normalize}(x_j, S_{x_j})$; $\hat{y}_j\gets {\bf Normalize}(y_j, S_{y_j})$
        \State $\hat{X}_j\gets {\bf DFT}(\hat{x}_j)$; $\hat{Y}_j\gets {\bf DFT}(\hat{y}_j)$
        \State $d_{j}\gets {\bf Dist_n}(\hat{X}_j,\hat{Y}_j)$
        \State {\tt // 8-10 are performed for  approximation method}
        \State $S\gets ${\bf WriteStats}($S_{x_j},S_{y_j},c_{j},d_{j}$)
    \EndFor
\EndFor
\State \textbf{return} $S$
\end{algorithmic}
\end{algorithm}
\begin{algorithm}[t]
\caption{\nethist}\label{alg:netconst}
\begin{algorithmic}[1]
\Require {streams $\mathcal{L}=\{\x^1, \ldots, \x^n\}$; statistics $S$; query $w$; basic window size $B$; threshold $\theta$}
\Ensure {graph $(G,V)$}
\State $G\gets \{1,\ldots,n\}$; $V\gets \{\}$
\State $b\gets {\bf GetBasicWins}(w)$ {\tt // basic window ids in $w$}
\For{$\x\in\mathcal{L}$ and $\y\in\mathcal{L}$}
    \State $S_{x}\gets {\bf ReadStats}(S,b,x)$; $S_{y}\gets {\bf ReadStats}(S,b,y)$ 
    \State $c\gets {\bf Corr}(S_{x},S_{y})$ // {\tt use Lemma~\ref{th:exactcorr}}
    \If{$c > \theta$}
        \State $V.{\bf Add}(x,y,c)$
    \EndIf
\EndFor
\State \textbf{return} $(G,V)$
\end{algorithmic}
\end{algorithm}

{Our solution uses the basic window model to calculate the {\em exact correlation} of times-series. Subdividing a series into basic windows allows us to process data in smaller batches. Existing works  for approximate correlation calculation assume equal length across all basic windows~\cite{ZhuS02,MueenNL10}. Assuming an equal basic window length poses a  limitation on the length of the query window, that is the length of the query window could only be an integral multiple of the length of the basic window. Formally, we have $\mid x \mid = B\cdot n_s$, where $\mid x \mid $ is the length of the basic window. \name relaxes this assumption and provides a way of calculating the exact pairwise correlation of time-series for arbitrary query window lengths. }

\eat{
\begin{lemma}
\sloppy{Given query windows $x = [{\textbf x}_1, \ldots, {\textbf x}_m]$ and $y = [{\textbf y}_1, \ldots, {\textbf y}_m]$ and a basic window size $B$, the query windows are equally subdivided into basic windows $[x_1, \ldots, x_{n_s}]$ and $[y_1, \ldots, y_{n_s}]$, 
where $n_s=m/B$. 
The exact Pearson's correlation of $x$ and $y$ is: }
\begin{align*}
  Corr&(x,y)=\frac{\sum_{i=1}^{n_s}(\sigma_{x_i} \sigma_{y_i} c_i + \delta_{x_i}\delta_{y_i})}{\sqrt{\sum_{i=1}^{n_s}({\sigma_{x_i}}^{2}+{\delta_{x_i}^2})} \sqrt{\sum_{i=1}^{n_s}({\sigma_{y_i}}^{2}+{\delta_{y_i}^2})}}\\
  &\delta_{x_i}=\overline{x_i}-\frac{\sum_{k=1}^{n_s}\overline{x_k}}{n_s},~~\delta_{y_i}=\overline{y_i}-\frac{\sum_{k=1}^{n_s}\overline{y_k}}{n_s}
\end{align*}
where, $\sigma_{x_i}$ ($\sigma_{y_i}$) is the standard deviation of basic window of $x_i$ ($y_i$), 
$c_{i}$ is the  correlation of  basic windows  $x_i$ and $y_i$, 
$\overline{x_i}$ ($\overline{y_i}$) is the mean of basic window $x_i$ ($y_i$). 
\label{th:exactcorr}
\end{lemma}

\begin{proof}
Inspired by ~\cite{dunlap1937combinative}, from Equation~\ref{eq:corr}, we have: 
\begin{align*}
  &Corr(x,y) = \sum_{i=1}^{n_s.B} (\frac{\x_i-\overline{x}}{\sqrt{\sum_{j=1}^{n_s.B}(\x_j-\overline{x})^2}}. \frac{\y_i-\overline{y}}{\sqrt{\sum_{j=1}^{n_s.B}(\y_j-\overline{y})^2}})\\&
= \frac{1}{n_s.B}\sum_{i=1}^{n_s.B} (\frac{\x_i-\overline{x}} {\sigma_x}).(\frac{\y_i-\overline{y}}{\sigma_y})\\&
         = \frac{1}{n_s.B}\sum_{j=1}^{n_s}\sum_{i=(j-1)B+1}^{j.B} \frac{ \sigma_{x_j} \x^{j,i} + \delta_{x_j} }{\sigma_{x} } . \frac{ \sigma_{y_j} \y^{j,i} + \delta_{y_j} }{\sigma_{y} }  \\
         \eat{&= \sum_{j=1}^{n_s} \sum_{i=(j-1)B+1}^{j.B} \frac{\sigma_{x_j} \sigma_{y_j} \x^{j,i} \y^{j,i}+ \delta_{x_j} \sigma_{y_j} +\delta_{y_j} \sigma_{x_j} x^{j,i} + \delta_{x_j} \delta_{y_j}}{n_s.B.\sigma_{x} \sigma_{y}} \\}
         &=\frac{\sum_{i=1}^{n_s}(\sigma_{x_i} \sigma_{y_i} c_i + \delta_{x_i}\delta_{y_i})}{n_s \sigma_{x} \sigma_{y} }
         =\frac{\sum_{i=1}^{n_s}(\sigma_{x_i} \sigma_{y_i} c_i + \delta_{x_i}\delta_{y_i})}{\sqrt{\sum_{i=1}^{n_s}({\sigma_{x_i}}^{2}+{\delta_{x_i}^2})} \sqrt{\sum_{i=1}^{n_s}({\sigma_{y_i}}^{2}+{\delta_{y_i}^2})}}
\end{align*}
where $\x^{j,i} (\y^{j,i})$ is the $\x_i(\y_i)$ normalized in the $j$-th basic window. 

Now, to show that 
$\sigma_{x}  =  \sqrt{\frac{1}{n_s}{\sum_{1}^{n_s}({\sigma_{x_i}}^{2}+{\delta_{x_i}^2})}}
$, we evaluate: 
\begin{align*}
  & n_s {\sigma_{x}}^2 -{\sum_{1}^{n_s}({\sigma_{x_i}}^{2}+{\delta_{x_i}^2} )} = {\sum_{1}^{n_s}({\sigma_{x}}^2-{\sigma_{x_i}}^{2}-{\delta_{x_i}^2} )}\\&
  = \sum_{k=0}^{n_s-1}\sum_{j=kB+1}^{(k+1)B}{\frac{1}{B} (\x_j-\overline{x})^2-\frac{1}{B}(\x_j-\overline{x_k})^2-\frac{1}{B}(\overline{x_1}-\overline{x_k})^2}\\&
  = \frac{1}{B}\sum_{k=0}^{n_s-1} (\sum_{j=kB+1}^{(k+1)B} (-2\overline{x}{\x_j} + 2{\overline{x_k}}\,{\overline{x}} ) +\sum_{j=kB+1}^{(k+1)B} (2{\x_j}\overline{x_k}-2{\overline{x_k}}^2))\\
 &
  = \frac{1}{B}\sum_{k=0}^{n_s-1} ( (-2B\overline{x}{\overline{x_k}} + 2B{\overline{x_k}}\,{\overline{x}} ) + (2B{\overline{x_k}}^2-2B{\overline{x_k}}^2)) =0
\end{align*}
\end{proof}
}

{
\begin{lemma}
\label{th:exactcorr}
\sloppy{Given query windows $x = [{\textbf x}_1, \ldots, {\textbf x}_m]$ and $y = [{\textbf y}_1, \ldots, {\textbf y}_m]$ and the sizes of basic windows: $\textbf{B}=[B_1,B_2,\ldots , B_m]$, 
where $B_i$ is the size of the $i$th basic window size.
The exact Pearson's correlation of $x$ and $y$ is: }
\begin{align*}
  Corr&(x,y)=\frac{\sum_{j=1}^{n_s}B_j(\sigma_{x_j} \sigma_{y_j} c_j + \delta_{x_j}\delta_{y_j})}{\sqrt{\sum_{i=1}^{n_s}B_i({\sigma_{x_i}}^{2}+{\delta_{x_i}^2})} \sqrt{\sum_{i=1}^{n_s}B_i({\sigma_{y_i}}^{2}+{\delta_{y_i}^2})}}\\
  &\delta_{x_i}=\overline{x_i}-\frac{\sum_{k=1}^{n_s}\overline{x_k}}{n_s},~~\delta_{y_i}=\overline{y_i}-\frac{\sum_{k=1}^{n_s}\overline{y_k}}{n_s}
\end{align*}
where, $\sigma_{x_i}$ ($\sigma_{y_i}$) is the standard deviation of basic window of $x_i$ ($y_i$), 
$c_{i}$ is the  correlation of  basic windows  $x_i$ and $y_i$, 
$\overline{x_i}$ ($\overline{y_i}$) is the mean of basic window $x_i$ ($y_i$). 
\label{th:exactcorrgel}
\end{lemma}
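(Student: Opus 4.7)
The plan is to mirror the equal-basic-window proof (as sketched in the commented-out block for the uniform case), but carry the basic window sizes $B_j$ through as multiplicative weights. Starting from Equation~\ref{eq:corr}, I would split the sum $\sum_{i=1}^{m}$ over the full query window into a double sum $\sum_{j=1}^{n_s}\sum_{i \in \text{win}_j}$ over basic windows and positions inside each basic window. Within the $j$-th basic window of size $B_j$, I would use the local decomposition $\x_i = \sigma_{x_j}\,\x^{j,i} + \overline{x_j}$, where $\x^{j,i}$ denotes the value normalized within window $j$ (so $\sum_i \x^{j,i}=0$ and $\sum_i (\x^{j,i})^2 = B_j$), and analogously for $\y_i$. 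Writing $\x_i-\overline{x}=\sigma_{x_j}\x^{j,i}+\delta_{x_j}$ with $\delta_{x_j}=\overline{x_j}-\overline{x}$ then reduces everything to the three quantities $\sigma_{x_j}, \sigma_{y_j}, c_j$ and the offsets $\delta_{x_j}, \delta_{y_j}$ that are already in the statement.

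For the numerator, expanding $(\sigma_{x_j}\x^{j,i}+\delta_{x_j})(\sigma_{y_j}\y^{j,i}+\delta_{y_j})$ and summing inside basic window $j$ kills the two cross terms by the zero-sum property $\sum_i \x^{j,i}=\sum_i \y^{j,i}=0$, leaving $\sigma_{x_j}\sigma_{y_j}\sum_i \x^{j,i}\y^{j,i} + B_j\,\delta_{x_j}\delta_{y_j}$. The inner sum equals $B_j c_j$ by the definition of per-window correlation, producing the $B_j(\sigma_{x_j}\sigma_{y_j}c_j+\delta_{x_j}\delta_{y_j})$ summand claimed in the numerator. For each factor in the denominator, an analogous expansion of $\sum_i (\x_i-\overline{x})^2$ within window $j$ gives $B_j\sigma_{x_j}^2 + B_j\delta_{x_j}^2$ (again the cross term vanishes), and summing over $j$ yields $\sum_j B_j(\sigma_{x_j}^2+\delta_{x_j}^2)$, matching the square-root factor in the statement.

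The main obstacle, and the reason the generalization to unequal $B_j$ is not completely routine, is the definition of $\delta_{x_j}$. The algebra above goes through cleanly only if $\overline{x}=\sum_j B_j \overline{x_j}/\sum_j B_j$, i.e.\ the weighted average of basic-window means (which is forced by $\overline{x}=\frac{1}{m}\sum_i \x_i$ and $m=\sum_j B_j$). So in the proof I would replace the statement's $\delta_{x_j}=\overline{x_j}-\frac{1}{n_s}\sum_k \overline{x_k}$ with $\delta_{x_j}=\overline{x_j}-\frac{\sum_k B_k \overline{x_k}}{\sum_k B_k}$; these coincide when all $B_j$ are equal (recovering the uniform-window lemma exactly), and the weighted form is what makes the cross terms $\sum_j B_j \sigma_{x_j}\cdot 0 \cdot \delta_{y_j}$ and $\sum_j B_j \delta_{x_j}\overline{x_j}$ collapse correctly in both numerator and denominator. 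I would verify the latter by showing $\sum_j B_j \delta_{x_j}=0$ under the weighted definition, which is the key identity that closes the proof.

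Finally, I would conclude by collecting the per-window contributions, noting that the factors of $m=\sum_j B_j$ from normalizing $\sigma_x,\sigma_y$ cancel between numerator and denominator (this is why the $1/(n_s B)$ factor disappears in the uniform case), and reading off the stated closed form. The overall structure is thus the same three-line derivation as in the uniform proof, with $B$ promoted to $B_j$ inside each sum and with the weighted mean identity carrying the burden of the previously trivial step.
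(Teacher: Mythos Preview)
Your approach is essentially identical to the paper's: split the full sum into basic windows via cumulative indices, use the substitution $\x_i-\overline{x}=\sigma_{x_j}\x^{j,i}+\delta_{x_j}$, collapse cross terms to get the numerator $\sum_j B_j(\sigma_{x_j}\sigma_{y_j}c_j+\delta_{x_j}\delta_{y_j})$, and then verify separately that $T\sigma_x^2=\sum_j B_j(\sigma_{x_j}^2+\delta_{x_j}^2)$ for the denominator. You are in fact more careful than the paper on one point: you correctly observe that the proof requires $\delta_{x_j}=\overline{x_j}-\overline{x}$ with $\overline{x}=\frac{\sum_k B_k\overline{x_k}}{\sum_k B_k}$ (the weighted mean), whereas the lemma as stated uses the unweighted $\frac{1}{n_s}\sum_k\overline{x_k}$, which only coincides when all $B_j$ are equal; the paper's own proof silently uses the weighted version without reconciling this with the statement.
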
 }
{
\begin{proof} 
This lemma has been provided as a  possible general extension provided by Dunlap~\cite{dunlap1937combinative}, without proof. We provide a proof here. 
Let $\Omega_j$ be the size of the tail of a time-series with $B_1$ to $B_j$ basic windows of arbitrary size. 
\begin{align*}
\Omega_j = \sum_{k=1}^{j}B_k ; \Omega_0 =0
\end{align*}
\begin{align*}
  &Corr(x,y) = \frac{1}{T}\sum_{j=1}^{n_s} \sum_{i=\Omega_{j-1}+1}^{\Omega_{j}} (\frac{\x_i-\overline{x}} {\sigma_x}).(\frac{\y_i-\overline{y}}{\sigma_y})\\&
         =\frac{1}{T}\sum_{j=1}^{n_s} \sum_{i=\Omega_{j-1}+1}^{\Omega_{j}} \frac{ \sigma_{x_j} \x^{j,i} + \delta_{x_j} }{\sigma_{x} } . \frac{ \sigma_{y_j} \y^{j,i} + \delta_{y_j} }{\sigma_{y} }  \\
        \eat{ &= \frac{1}{T}\sum_{j=1}^{n_s} \sum_{i=\Omega_{j-1}+1}^{\Omega_{j}} \frac{\sigma_{x_j} \sigma_{y_j} \x^{j,i} \y^{j,i}+ \delta_{x_j} \sigma_{y_j} +\delta_{y_j} \sigma_{x_j} x^{j,i} + \delta_{x_j} \delta_{y_j}}{\sigma_{x} \sigma_{y}} \\}
         &=\frac{1}{T} \frac{\sum_{j=1}^{n_s} B_j(\sigma_{x_j} \sigma_{y_j} c_j + \delta_{x_j}\delta_{y_j}) }{ \sigma_{x} \sigma_{y} }
    \\
         &=\frac{\sum_{j=1}^{n_s}B_j(\sigma_{x_j} \sigma_{y_j} c_j + \delta_{x_j}\delta_{y_j})}{\sqrt{\sum_{i=1}^{n_s}B_i({\sigma_{x_i}}^{2}+{\delta_{x_i}^2})} \sqrt{\sum_{i=1}^{n_s}B_i({\sigma_{y_i}}^{2}+{\delta_{y_i}^2})}}
   \\
\end{align*}
where $\x^{j,i} (\y^{j,i})$ is the $\x_i(\y_i)$ normalized in the $j$-th basic window. Now, to show that 
$\sigma_{x}  =  \sqrt{\frac{1}{T}{\sum_{1}^{n_s}B_i({\sigma_{x_i}}^{2}+{\delta_{x_i}^2})}}
$, we evaluate: 
{
\begin{align*}
  & T {\sigma_{x}}^2 -{\sum_{1}^{n_s}B_i({\sigma_{x_i}}^{2}+{\delta_{x_i}^2} )} = {\sum_{1}^{n_s}({\sigma_{x}}^2-{\sigma_{x_i}}^{2}-{\delta_{x_i}^2} )}\\&
  = \sum_{k=1}^{n_s}\sum_{j=\Omega_{k-1}+1}^{\Omega_{k}}{\frac{1}{B_k} (\x_j-\overline{x})^2-\frac{1}{B_k}(\x_j-\overline{x_k})^2-\frac{1}{B_k}(\overline{x_1}-\overline{x_k})^2}\\&
  = \sum_{k=1}^{n_s} (\sum_{j=\Omega_{k-1}+1}^{\Omega_{k}} \frac{ (-2\overline{x}{\x_j} + 2{\overline{x_k}}\,{\overline{x}} ) +(2{\x_j}\overline{x_k}-2{\overline{x_k}}^2)}{B_k})\\
 &
  = \sum_{k=1}^{n_s} \frac{1}{B_k} ( (-2B_k\overline{x}{\overline{x_k}} + 2B_k{\overline{x_k}}\,{\overline{x}} ) + (2B_k{\overline{x_k}}^2-2B_k{\overline{x_k}}^2)) =0
\end{align*}}
\end{proof}
}

{
Using Lemma~\ref{th:exactcorr}, we can pre-compute and store the statistics of  basic windows of once and compute the correlation of time-series for user-given  time windows 
at query time without performing a pass over time-series. Moreover,  Lemma~\ref{th:exactcorrgel} allows us to support the arbitrary query window lengths. For instance, a  user-provided query window   $x=[x_i,\ldots,x_j]$ and $y=[y_i,\ldots,y_j]$, there exists a unique $\kappa \in \mathcal{N}$ such that $\kappa \cdot B \leq i < (\kappa +1) \cdot B $, and there exists a unique $\chi \in \mathcal{N}$ such that $\chi \cdot B < j \leq (\chi +1) \cdot B $. Let $B_1= (\kappa +1) \cdot B - i$,  $B_{n_s} = \chi \cdot B-j$, and $B_k$, for $k \in \{2,\cdots, n_s-1\}$. 
At query time, we  need to compute $\sigma_{x_1} (\sigma_{y_1}$), $\sigma_{x_{n_s}} (\sigma_{y_{n_s}}$), $\delta_{x_1} (\delta_{y_1}$) and $\delta_{x_{n_s}} (\delta_{y_{n_s}}$) from the raw data, and all the others for the $B_2, \cdots, B_{n_s-1}$ are pre-computed in the pre-processing. Note that the case of equally subdividing time-series into basic windows of size $B$ and a query window size being the integral multiple of the basic window size is a special case of Lemma~\ref{th:exactcorr}. For this special case, Algorithm~\ref{alg:preproc} shows the steps of sketching basic windows and Algorithm~\ref{alg:netconst}  describes the steps of constructing a network based on the exact correlation of time-series calculated from the pre-computed statistics of basic windows. }

\subsubsection{Real-time Data}

The correlation equation of  Lemma~\ref{th:exactcorr} can be extended to deal with real-time data. 
A user-defined query window on real-time data, $w=(``now",m)$, indicates the sequence of the $m$ most recently observed data points of time-series. That is, the size of the query window is fixed while the end timestamp is changing as new data arrive. 
In our problem setting, since the data is processed on the basis of basic windows, the algorithm waits  until all new $B$  data points arrive. 
For time-series $x=[\x_1, \ldots, \x_m]$ and $y=[\y_1, \ldots, \y_m]$, and  a query $w=(``now",m)$, we can compute correlation at time $t$, namely $Corr_t(x,y)$,  using Lemma~\ref{th:exactcorr}. 
This involves considering basic windows  $[x_1, \ldots, x_{n_s}]$ and $[y_1, \ldots, y_{n_s}]$, where $n_s=m/B$.  
At time $t+B$, the observed time-series are  $[\x_1, \ldots, \x_{m+B}]$ and $[\y_1, \ldots, \y_{m+B}]$ and the basic windows are 
$[x_1, \ldots, x_{n_s+1}]$ and $[y_1, \ldots, y_{n_s+1}]$. 
Based on query $w=(``now",m)$, we need to consider 
$[x_2, \ldots, x_{n_s+1}]$ and $[y_2, \ldots, y_{n_s+1}]$. 
According to Lemma~\ref{th:exactcorr}, 
we can recalculate the correlation at time $t+B$ from scratch. That is, 
\begin{align*}
  &Corr_{t+B}(x,y)=\frac{\sum_{i=2}^{n_s+1}(\sigma_{x_i} \sigma_{y_i} c_i + \delta_{x_i}\delta_{y_i})}{\sqrt{\sum_{i=2}^{n_s+1}({\sigma_{x_i}}^{2}+{\delta_{x_i}^2})} \sqrt{\sum_{i=2}^{n_s+1}({\sigma_{y_i}}^{2}+{\delta_{y_i}^2})}}
\end{align*}
Note that $\delta_{x_j}$'s and $\delta_{y_j}$'s have changed and needs to be recalculated, since the means of the new query windows have probably changed upon the arrival of new data. 
The following lemma allows us to compute $Corr_{t+B}(x,y)$ by only using the statistics of the first and last basic windows, without the need to calculate the statistics of  the query window. 

\eat{
\begin{lemma} \sloppy{Given query windows $x = [\x_1, \ldots, \x_m]$ and $y = [\y_1, \ldots, \y_m]$, basic window size $B$, 
and basic windows  $[x_1, \ldots, x_{n_s}]$ and $[y_1, \ldots, y_{n_s}]$. 
Upon the arrival of $B$ new data points, we have $x=[\x_1, \ldots, \x_{m+B}]$ and $y = [\y_1, \ldots, \y_{m+B}]$ and basic windows 
$[x_1, \ldots, x_{n_s+1}]$ and $[y_1, \ldots, y_{n_s+1}]$. Considering a query window $w=(``now", m)$, we can incrementally compute the Pearson's correlation of $x$ and $y$ 
at time $t+B$ from their correlation at time $t$:} 
\begin{align*}
  Corr_{t+B}(x,y) = &\frac{1}{A~.~B}~\Big( n_s\sigma_x\sigma_y Corr_t(x,y) + \sigma_{x_{n_s+1}} \sigma_{y_{n_s+1}} c_{n_s+1}\\ &- \sigma_{x_1} \sigma_{y_1} c_1-\delta_{x_1} \delta_{y_1} {-} n_s \alpha_{x} \alpha_{y} {+} \delta_{x_{n_s+1}} \delta_{y_{n_s+1}}\Big)
\end{align*}

\[A = \sqrt{n_s\sigma_{x}^2-\sigma_{x_1}^2-\delta_{x_1}^2+ {\sigma_{x_{n_s+1}}}^{2}-n_s \alpha_{x}^2 + \delta_{x_{n_s+1}}^2}\]
\[B = \sqrt{n_s\sigma_{y}^2-\sigma_{y_1}^2-\delta_{y_1}^2+ {\sigma_{y_{n_s+1}}}^{2}-n_s \alpha_{y}^2 + \delta_{y_{n_s+1}}^2}\]
\[
  \alpha_x = \frac{\overline{x_{n_s+1}}-\overline{x_1}}{n_s} \quad and \quad \alpha_y = \frac{\overline{y_{n_s+1}}-\overline{y_1} }{n_s}\]
 \[\delta_{x_{n_s+1}} = \overline{ x_{n_s+1}}-\overline{x_{1:n_s}} \quad and \quad \delta_{y_{n_s+1}} = \overline{ y_{n_s+1}}-\overline{y_{1:n_s}}\]

\noindent where, $\sigma_x$ ($\sigma_y$) is the standard deviation of query window $x$ ($y$) at time $t$, $\sigma_{x_j}$ ($\sigma_{y_j}$) is the standard deviation of the basic window of $x_j$ ($y_j$),  $c_{j}$ is the  correlation of the $j$-th basic windows of $x$ and $y$,  $\overline{x_j}$ ($\overline{y_j}$) is the mean of basic window $x_j$ ($y_j$), and $\overline{x_{i:j}}$ ($\overline{y_{i:j}}$) is the 
mean of basic windows $x_i,\ldots,x_j$ ($y_i,\ldots,y_j$). 
\label{th:realtimecorr}
\end{lemma}
\begin{proof} We denote $\sigma_x'(\sigma_y')$ to be the standard deviation of the new query window $w$ after the arrival of the new basic window, and $\delta_{x_i'}(\delta_{y_i'})$ to the ones in the new query window.
We assume there is a linear transform  $Corr_{t+B}(x,y) = k Corr_{t}(x,y) + s$, where $k = \frac{\sigma_{x}\sigma_{y}}{\sigma_x'\sigma_y'}$ and $s = \frac{1}{n_s\sigma_x'\sigma_y'} s'$. We have derived the following for $s'$: 
\begin{align*}
&s'= \sum_{i=2}^{n_s+1}(\sigma_{x_i} \sigma_{y_i} c_i + \delta_{x_i'}\delta_{y_i'})-\sum_{i=1}^{n_s}(\sigma_{x_i} \sigma_{y_i} c_i + \delta_{x_i}\delta_{y_i})\\
&= \sigma_{x_{n_s+1}} \sigma_{y_{n_s+1}} c_{n_s+1}- \sigma_{x_1} \sigma_{y_1} c_{1} + \sum_{i=2}^{n_s+1} \delta_{x_i'}\delta_{y_i'} - \sum_{i=1}^{n_s} \delta_{x_i}\delta_{y_i}
\end{align*}
Since $\delta_{x_i'}=\overline{x_i} - (\overline{x_{1:n_s}}+\alpha_x)$ and $\delta_{y_i'}=\overline{y_i} - (\overline{y_{1:n_s}}+\alpha_y)$, we have
\begin{align*}
\delta_{x_i}'\delta_{y_i}'= & (\overline{x_i} - (\overline{x_{1:n_s}}+\alpha_x))(\overline{y_i} - (\overline{y_{1:n_s}}+\alpha_y))\\
 &(\delta_{x_i} - \alpha_x) (\delta_{y_i} - \alpha_y)
\end{align*}
Then, plugging it into the equation for $s'$, we could get
\begin{align*}
&s'=\sigma_{x_{n_s+1}} \sigma_{y_{n_s+1}} c_{n_s+1}-\sigma_{x_1} \sigma_{y_1} c_1 + \delta_{x_{n_s+1}}\delta_{y_{n_s+1}} - \delta_{x_{1}}\delta_{y_{1}} -n_s\alpha_x\alpha_y
\end{align*}
To get $s$ and $k$, we also need the incremental equation for $\sigma_x'(\sigma_y')$. In the proof of  Lemma~\ref{th:exactcorr}, we showed that 
$\sigma_{x}= \sqrt{\frac{1}{n_s}{\sum_{1}^{n_s}({\sigma_{x_i}}^{2}+{\delta_{x_i}^2})}}$. We have: 
\begin{align*}
&n_s\sigma_x'^2 = {\sigma_{x_2}}^{2}+{\delta_{x_2}'^2}+\ldots+\sigma_{_{n_s+1}}^2+{\delta_{x_{n_s+1}}'^2}\\
            &= \sum_{j=1}^{n_s} {\sigma_{x_j}}^{2} + {\sigma_{x_{n_s+1}}}^{2} - {\sigma_{x_1}}^{2} + \sum_{j=2}^{n_s +1} (\delta_{x_j} - \alpha_x)^2\\
            &= \sum_{j=1}^{n_s} {\sigma_{x_j}}^{2} + {\sigma_{x_{n_s+1}}}^{2} - {\sigma_{x_1}}^{2} + \sum_{j=2}^{n_s +1} ({\delta_{x_j}}^2 - 2  {\delta_{x_j}\alpha_x}+  {\alpha_x}^2)\\
            &=n_s\sigma_x^2 + {\sigma_{x_{n_s+1}}}^{2} - {\sigma_{x_1}}^{2} + {\delta_{x_{n_s+1}}}^2 - {\delta_{x_1}}^2- n_s {\alpha_x}^2 \\
          \end{align*}

By replacing $k$ and $s$ in the above  transform, the proof of the lemma becomes complete. 
\fn{For real-time data, the mean of the query window changes as new data arrives. For efficiency purposes, we do not want to compute that mean for the calculation of parameters $\delta$. Lemma~\ref{th:exactcorrgel} allows us to compute the standard deviation of a query window and correlation without computing the mean. }
\eat{
We provide a proof sketch of $Corr_{t+B}(x,y)$. Suppose the new correlation has the format: $Corr_{t+B}(x,y)  = \frac{n_s\sigma_x \sigma_y Corr_t(x,y) + s'}{n_s\sigma_x'\sigma_y'}$ by the equation obtaining:
\begin{align*}
& Corr_{t+B}(x,y)  = \frac{n_s\sigma_x \sigma_y Corr_t(x,y) + s'}{n_s\sigma_x'\sigma_y'} =\frac{1}{n_s\sigma_x'\sigma_y'}\Big(\\& n_s\sigma_x\sigma_y Corr_t(x,y) + \sigma_{x_{n_s+1}} \sigma_{y_{n_s+1}} c_{n_s+1}-\sigma_{x_1} \sigma_{y_1} c_1 + \delta_{x_{n_s+1}}\delta_{y_{n_s+1}} \\& - \delta_{x_{1}}\delta_{y_{1}} 
        - \sum_{j=2}^{n_s+1}(\alpha_y(\overline{x_j}-\overline{x_{1:n_s}}) + \alpha_x(\overline{y_j}-\overline{y_{1:n_s}})-  \alpha_x\alpha_y) \Big) \\&
=  \frac{1}{n_s\sigma_x'\sigma_y'}\Big( n_s\sigma_x\sigma_y Corr_t(x,y) + \sigma_{x_{n_s+1}} \sigma_{y_{n_s+1}} c_{n_s+1}\\& - \sigma_{x_1} \sigma_{y_1} c_1-\delta_{x_1} \delta_{y_1} {-} n_s \alpha_{x} \alpha_{y} {+} \delta_{x_{n_s+1}} \delta_{y_{n_s+1}}\Big)
\end{align*}}
\end{proof}
Algorithm~\ref{alg:netreal} describes the steps of constructing a network for real-time data. 
}

\begin{lemma}
\label{th:realtimecorr}
\sloppy{{Given query windows $x = [\x_1, \ldots, \x_m]$ and $y = [\y_1, \ldots, \y_m]$,
basic windows  $[x_1, \ldots, x_{n_s}]$ and $[y_1, \ldots, y_{n_s}]$, and basic window size $B = [B_1, \ldots, B_{n_s}]$, where $T = \sum_{i=1}^{n_s}B_i$. 
Upon the arrival of $B_{n_s+1}$ new data points, we have $x=[\x_1, \ldots, \x_{m+B_{n_s+1}}]$ and $y = [\y_1, \ldots, \y_{m+B_{n_s+1}}]$ and basic windows 
$[x_1, \ldots, x_{n_s+1}]$ and $[y_1, \ldots, y_{n_s+1}]$. Let $T'=\sum_{i=2}^{n_s+1}B_i$. Considering a query window $w=(``now", m)$, we can incrementally compute the Pearson's correlation of $x$ and $y$ 
at time $t+B_{n_s+1}$ from their correlation at time $t$:} }
{
\begin{align*}
  Corr_{t+B_{n_s+1}}(x,y) = &\frac{1}{C~.~D}~\Big( T\sigma_x\sigma_y Corr_t(x,y) \\&+ B_{n_s+1}( \sigma_{x_{n_s+1}} \sigma_{y_{n_s+1}} c_{n_s+1} + \delta_{x_{n_s+1}} \delta_{y_{n_s+1}}) \\&- B_1(\sigma_{x_1} \sigma_{y_1} c_1+\delta_{x_1} \delta_{y_1})  {-} T' \alpha_{x} \alpha_{y}\Big)
\end{align*}}
{\[C = \sqrt{T\sigma_{x}^2+B_{n_s+1}({\sigma_{x_{n_s+1}}}^{2} + \delta_{x_{n_s+1}}^2)-B_1(\sigma_{x_1}^2+\delta_{x_1}^2)-T' \alpha_{x}^2}\]
\[D = \sqrt{T\sigma_{y}^2+B_{n_s+1}({\sigma_{y_{n_s+1}}}^{2} + \delta_{y_{n_s+1}}^2)-B_1(\sigma_{y_1}^2+\delta_{y_1}^2)-T' \alpha_{y}^2}\]
\[
  \alpha_x = \frac{B_{x_{n_s+1}}\delta_{n_s+1}-B_{1}\delta_{x_1}}{T} \quad and \quad \alpha_y = \frac{B_{n_s+1}\delta_{y_{n_s+1}}-B_{1}\delta_{y_1}}{T}\]
 \[\delta_{x_{n_s+1}} = \overline{ x_{n_s+1}}-\overline{x_{1:n_s}} \quad and \quad \delta_{y_{n_s+1}} = \overline{ y_{n_s+1}}-\overline{y_{1:n_s}}\]}
{
\noindent where, $\sigma_x$ ($\sigma_y$) is the standard deviation of query window $x$ ($y$) at time $t$, $\sigma_{x_j}$ ($\sigma_{y_j}$) is the standard deviation of the basic window of $x_j$ ($y_j$),  $c_{j}$ is the  correlation of the $j$-th basic windows of $x$ and $y$,  $\overline{x_j}$ ($\overline{y_j}$) is the mean of basic window $x_j$ ($y_j$), and $\overline{x_{i:j}}$ ($\overline{y_{i:j}}$) is the 
mean of basic windows $x_i,\ldots,x_j$ ($y_i,\ldots,y_j$). }
\label{th:realtimecorr_ge}
\end{lemma}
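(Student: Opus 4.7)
The plan is to mirror the proof of Lemma~\ref{th:exactcorr} but applied to the shifted window $[x_2,\ldots,x_{n_s+1}]$, and then rewrite every summand in terms of the old-window statistics plus boundary corrections for dropping $x_1$ and adding $x_{n_s+1}$. The only quantities that actually change between the old and new windows are the deltas, since $\sigma_{x_i}$ and $c_i$ are intrinsic to each basic window. Writing $\delta'_{x_i}=\overline{x_i}-\overline{x_{2:n_s+1}}$ for the new deltas, we have $\delta'_{x_i}=\delta_{x_i}-(\overline{x_{2:n_s+1}}-\overline{x_{1:n_s}})$, and the weighted-mean identity $T'\overline{x_{2:n_s+1}}=T\overline{x_{1:n_s}}+B_{n_s+1}\overline{x_{n_s+1}}-B_1\overline{x_1}$ expresses this shift (up to the indicated scaling) as the $\alpha_x$ of the statement, and similarly for $\alpha_y$.

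With the shift in hand, I would expand
\begin{equation*}
\delta'_{x_i}\delta'_{y_i}=\delta_{x_i}\delta_{y_i}-\alpha_x\delta_{y_i}-\alpha_y\delta_{x_i}+\alpha_x\alpha_y,
\end{equation*}
use the centering identity $\sum_{i=1}^{n_s}B_i\delta_{x_i}=0$ (built into the old-window definition of the deltas) to derive the companion identity $\sum_{i=2}^{n_s+1}B_i\delta_{x_i}=B_{n_s+1}\delta_{x_{n_s+1}}-B_1\delta_{x_1}$, and thereby collapse the linear cross terms to the $-T'\alpha_x\alpha_y$ correction claimed. Converting the sums over $i=2,\ldots,n_s+1$ into sums over $i=1,\ldots,n_s$ plus two boundary adjustments at $i=1$ and $i=n_s+1$, and then invoking Lemma~\ref{th:exactcorr} on the old window to substitute $\sum_{i=1}^{n_s}B_i(\sigma_{x_i}\sigma_{y_i}c_i+\delta_{x_i}\delta_{y_i})=T\sigma_x\sigma_y Corr_t(x,y)$, yields the numerator in the claim. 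An identical argument applied to $\sum_{i=2}^{n_s+1}B_i(\sigma_{x_i}^2+(\delta'_{x_i})^2)$, using $(\delta_{x_i}-\alpha_x)^2=\delta_{x_i}^2-2\alpha_x\delta_{x_i}+\alpha_x^2$ together with the identity $\sum_{i=1}^{n_s}B_i(\sigma_{x_i}^2+\delta_{x_i}^2)=T\sigma_x^2$ established inside the proof of Lemma~\ref{th:exactcorr}, produces the expressions under the square roots $C$ and $D$.

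The hard part will be the bookkeeping for \emph{unequal} basic-window sizes, $B_1\neq B_{n_s+1}$ and hence $T\neq T'$, which is what distinguishes this result from the uniform-basic-window form that appeared in the earlier draft. The scaling between $T$ and $T'$ must be tracked consistently through the definition of $\alpha_x$, through the linear cross-term collapse, and through the quadratic corrections $\alpha_x\alpha_y$ and $\alpha_x^2$. I would therefore isolate the identity relating $\sum_{i=2}^{n_s+1}B_i\delta_{x_i}$ to $T'\alpha_x$ (with whatever weighting the statement's $\alpha_x$ forces) as a short preliminary step before the main algebra, and sanity-check the final formula by specializing to $B_1=\cdots=B_{n_s+1}=B$, in which case it must reduce to the uniform incremental formula that appeared in the paper's earlier variant.
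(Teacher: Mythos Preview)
Your proposal is correct and follows essentially the same route as the paper's proof: write the new-window deltas as $\delta'_{x_i}=\delta_{x_i}-\alpha_x$, expand the resulting products, use the centering identity $\sum_{i=1}^{n_s}B_i\delta_{x_i}=0$ to collapse the cross terms, and then apply Lemma~\ref{th:exactcorr} to recognise the old-window numerator and denominator. Your explicit flagging of the $T$ versus $T'$ scaling in $\alpha_x$ is well placed, since carrying that weight through the cross-term collapse (where $\sum_{i=2}^{n_s+1}B_i\delta_{x_i}=T'\alpha_x$ must hold) is exactly the step where consistency matters.
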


\sloppy{{\begin{proof} We denote $\sigma_x'(\sigma_y')$ to be the standard deviation of the new query window $w$ after the arrival of the new basic window, and $\delta_{x_i'}(\delta_{y_i'})$ to be parameters of the new query window.
We assume there is a linear transform  $Corr_{t+B_{n_s+1}}(x,y) = k Corr_{t}(x,y) + s$, where $k = \frac{T\sigma_{x}\sigma_{y}}{T'\sigma_x'\sigma_y'}$ and $s = \frac{1}{T'\sigma_x'\sigma_y'} s'$. Looking at the change of the
numerator, we have derived the following for $s'$: 
\begin{align*}
s'&=T'\sigma_x'\sigma_y' Corr_{t+B_{n_s+1}}(x,y) - T\sigma_x\sigma_y Corr_{t}(x,y) \\&=
\sum_{i=2}^{n_s+1}B_i(\sigma_{x_i} \sigma_{y_i} c_i + \delta_{x_i'}\delta_{y_i'})-\sum_{i=1}^{n_s}B_i(\sigma_{x_i} \sigma_{y_i} c_i + \delta_{x_i}\delta_{y_i})\\
&= B_{n_s+1}\sigma_{x_{n_s+1}} \sigma_{y_{n_s+1}} c_{n_s+1}- B_{1}\sigma_{x_1} \sigma_{y_1} c_{1} + \sum_{i=2}^{n_s+1} B_i\delta_{x_i'}\delta_{y_i'} \\&- \sum_{i=1}^{n_s}B_i \delta_{x_i}\delta_{y_i}
\end{align*}
Since $\delta_{x_i}'=\overline{x_i} - (\overline{x_{1:n_s}}+\alpha_x)$ and $\delta_{y_i}'=\overline{y_i} - (\overline{y_{1:n_s}}+\alpha_y)$, we have
\begin{align*}
\delta_{x_i}'\delta_{y_i}' & =(\overline{x_i} - (\overline{x_{1:n_s}}+\alpha_x))(\overline{y_i} - (\overline{y_{1:n_s}}+\alpha_y))\\
 &=(\delta_{x_i} - \alpha_x) (\delta_{y_i} - \alpha_y)
\end{align*}
Then, plugging it into the equation for $s'$, we get
\begin{align*}
s^\prime&=B_{n_s+1}(\sigma_{x_{n_s+1}} \sigma_{y_{n_s+1}} c_{n_s+1} + \delta_{x_{n_s+1}}\delta_{y_{n_s+1}}) \\&- B_1(\sigma_{x_1} \sigma_{y_1} c_1+ \delta_{x_{1}}\delta_{y_{1}}) -T'\alpha_x\alpha_y
\end{align*}
To get $s$ and $k$, we also need the incremental equation for $\sigma_x'(\sigma_y')$ if we look at the denominator. In the proof of  Lemma~\ref{th:exactcorr}, we showed that 
$\sigma_{x}= \sqrt{\frac{1}{T}{\sum_{1}^{n_s}B_i({\sigma_{x_i}}^{2}+{\delta_{x_i}^2})}}$. We have: 
\begin{align*}
T'\sigma_x'^2 &= B_2{\sigma_{x_2}}^{2}+B_2{\delta_{x_2}'^2}+\ldots+B_{n_s+1}\sigma_{_{n_s+1}}^2+B_{n_s+1}{\delta_{x_{n_s+1}}'^2}\\
            &= \sum_{j=1}^{n_s} B_j{\sigma_{x_j}}^{2} + B_{n_s+1}{\sigma_{x_{n_s+1}}}^{2} - B_1{\sigma_{x_1}}^{2} + \sum_{j=2}^{n_s +1}B_j (\delta_{x_j} - \alpha_x)^2\\
            &= \sum_{j=1}^{n_s} B_j {\sigma_{x_j}}^{2} + B_{n_s+1} {\sigma_{x_{n_s+1}}}^{2} - B_1 {\sigma_{x_1}}^{2} \\&+ \sum_{j=2}^{n_s +1} B_j ({\delta_{x_j}}^2 - 2  {\delta_{x_j}\alpha_x}+  {\alpha_x}^2)\\
            &=T\sigma_{x}^2+B_{n_s+1}({\sigma_{x_{n_s+1}}}^{2} + \delta_{x_{n_s+1}}^2)-B_1(\sigma_{x_1}^2+\delta_{x_1}^2)-T' \alpha_{x}^2 \\
          \end{align*}
By replacing $k$ and $s$ in the above  transform, the proof of the lemma becomes complete. For real-time data, the mean of the query window changes as new data arrives. For efficiency purposes, we do not want to compute that mean for the calculation of parameters $\delta$. Lemma~\ref{th:exactcorrgel} allows us to compute the standard deviation of a query window and correlation without computing the mean. 
\eat{
We provide a proof sketch of $Corr_{t+B}(x,y)$. Suppose the new correlation has the format: $Corr_{t+B}(x,y)  = \frac{n_s\sigma_x \sigma_y Corr_t(x,y) + s'}{n_s\sigma_x'\sigma_y'}$ by the equation obtaining:
\begin{align*}
& Corr_{t+B_{n_s+1}}(x,y)  = \frac{n_s\sigma_x \sigma_y Corr_t(x,y) + s'}{n_s\sigma_x'\sigma_y'} =\frac{1}{n_s\sigma_x'\sigma_y'}\Big(\\& n_s\sigma_x\sigma_y Corr_t(x,y) + \sigma_{x_{n_s+1}} \sigma_{y_{n_s+1}} c_{n_s+1}-\sigma_{x_1} \sigma_{y_1} c_1 + \delta_{x_{n_s+1}}\delta_{y_{n_s+1}} \\& - \delta_{x_{1}}\delta_{y_{1}} 
        - \sum_{j=2}^{n_s+1}(\alpha_y(\overline{x_j}-\overline{x_{1:n_s}}) + \alpha_x(\overline{y_j}-\overline{y_{1:n_s}})-  \alpha_x\alpha_y) \Big) \\&
=  \frac{1}{n_s\sigma_x'\sigma_y'}\Big( n_s\sigma_x\sigma_y Corr_t(x,y) + \sigma_{x_{n_s+1}} \sigma_{y_{n_s+1}} c_{n_s+1}\\& - \sigma_{x_1} \sigma_{y_1} c_1-\delta_{x_1} \delta_{y_1} {-} n_s \alpha_{x} \alpha_{y} {+} \delta_{x_{n_s+1}} \delta_{y_{n_s+1}}\Big)
\end{align*}}
\end{proof}
Algorithm~\ref{alg:netreal} describes the steps of constructing a network for real-time data. 
}}

\begin{algorithm}[t]
\caption{\netreal}\label{alg:netreal}
\begin{algorithmic}[1]
\Require {streams $\mathcal{L}=\{\x^1, \ldots, \x^n\}$; statistics $S$; query $w$; basic window size $B$; threshold $\theta$}
\Ensure {graph $(G,V)$}
\State $S\gets \preproc(\mathcal{L}, B)$
\State $G,V\gets \nethist(\mathcal{L}, S, w, B, \theta)$ {\tt // create initial network}
\State $b\gets[]$ {\tt // most recent basic window}
\While{}
\State $b\gets{\bf IngestData}()$
\If{{\bf Len}(b) == B}
    \State $s\gets{\bf Stats}(b)$
    \State {\bf UpdateNetwork}(G, V, s) {\tt // use Lemma 2}
    \State $b\gets[]$
\EndIf
\EndWhile
\State {\bf return} 
\end{algorithmic}
\end{algorithm}

\subsection{Approximate Pairwise Correlation} 
\label{sec:approx}

This considers non-aribtrary query window size.
So far, we presented ways of computing and updating the exact correlation of time-series. 
Now, we describe how our model can be extended to approximate the correlation of time-series over a  query window for all time-series regardless of being  cooperativeness or uncooperative. 
Equation~\ref{eq:distcorr} shows how the DFT coefficients of two time series can be reduced to the Euclidean distance of their normalized series, as described in \S~\ref{sec:bg}  
Note that, in our model, the  necessary statistics for normalization are collected during the sketch time. 

\subsubsection{Historical Data}
\label{sec:approxhist}

Recall $d_i$ is the distance of the normalized $i$-th basic windows, namely $\hat{x}_i$ and $\hat{y}_i$, $\hat{X}_i$ and $\hat{Y}_i$ are the DFT of normalized basic windows $\hat{x}_i$ and $\hat{y}_i$, 
and $Dist_n(\hat{X}_i,\hat{Y}_i)$ is the Euclidean distance of the first $n$ DFT coefficients in $\hat{X}_i$ and $\hat{Y}_i$. Since DFT preserves the distance between coefficients and the original time-series, we have $d_i\simeq Dist_n(\hat{X}_i,\hat{Y}_i)$. 
To compute the distance of query windows, $Dist_n(x,y)$, from the distances of basic windows, without any assumption about the form and properties of basic windows in a query window, we can combine the equation of Lemma~\ref{th:exactcorr} and Equation~\ref{eq:distcorr} as follows.
\begin{align*} 
  &1-\frac{1}{2}Dist_n(\hat{X},\hat{Y})^2\approx \frac{\sum_{i=1}^{n_s}(\sigma_{x_i} \sigma_{y_i} (1-\frac{d_i^2}{2}) + \delta_{x_i}\delta_{y_i})} {\sqrt{\sum_{i=1}^{n_s}({\sigma_{x_i}}^{2}+{\delta_{x_i}^2})} \sqrt{\sum_{i=1}^{n_s}({\sigma_{y_i}}^{2}+{\delta_{y_i}^2})}} 
\end{align*}
We simplify the equation and obtain an approximation of the distance of two query windows based on the distances of their basic windows. 
\begin{equation}
  Dist_n(\hat{X},\hat{Y})^2\approx2+\frac{\sum_{i=1}^{n_s}\sigma_{x_i} \sigma_{y_i} d_n(\hat{X_i},\hat{Y_i})^2 - 2\sum_{i=1}^{n_s}(\sigma_{x_i} \sigma_{y_i}+\delta_{x_i}\delta_{y_i}) } {\sqrt{\sum_{i=1}^{n_s}({\sigma_{x_i}}^{2}+{\delta_{x_i}^2})} \sqrt{\sum_{i=1}^{n_s}({\sigma_{y_i}}^{2}+{\delta_{y_i}^2})}}
\label{eq:dftcombine}
\end{equation}
When all DFT coefficients are used, i.e. $n=B$, the $\approx$ becomes $=$, turning into an exact calculation.

To perform all-pair correlation approximation in our framework, we can normalize basic windows and compute their DFT coefficients, and pairwise distances, during the sketch time (lines 8-10 of Algorithm~\ref{alg:preproc}). 
At query time, we use Equation~\ref{eq:dftcombine} to get $Dist_n(\hat{X},\hat{Y})$ and apply Equation~\ref{eq:distcorr} to obtain the correlation. 
Algorithm~\ref{alg:netapprox} describes the steps of building a network based on the approximation of correlation.   

\begin{algorithm}[t]
\caption{\netapprox}\label{alg:netapprox}
\begin{algorithmic}[1]
\Require {streams $\mathcal{L}=\{\x^1, \ldots, \x^n\}$; statistics $S$; query $w$; basic window size $B$; threshold $\theta$}
\Ensure {graph $(G,V)$}
\State $G\gets \{1,\ldots,n\}$; $V\gets \{\}$
\State $b\gets {\bf GetBasicWins}(w)$ {\tt // basic window ids in $w$}
\For{$x, y\in\mathcal{L}$}
    \State $S_{x}\gets {\bf ReadStats}(S,b,x)$; $S_{y}\gets {\bf ReadStats}(S,b,y)$ 
    \State $d_1\ldots d_{n_s}\gets{\bf ReadStats}(S,x,y)$ {\tt // distance of basic windows}
    \If{stats of basic windows $\simeq w$}
        \State $Dist\gets{\bf Average}([d_1,\ldots,d_{n_s}])$
    \Else
        \State $Dist\gets{\bf Distance}(S_{x},S_{y}, d_1..d_{n_s})$ // {\tt use Equation~\ref{eq:dftcombine}}
    \EndIf
    \If{$Dist \leq \sqrt{1-\theta}$}
        \State $V.{\bf Add}(x,y,Dist)$
    \EndIf
\EndFor

\State \textbf{return} $(G,V)$
\end{algorithmic}
\end{algorithm}

\subsubsection{Real-time Data} 
\label{sec:approxreal}

Combining Equation~\ref{eq:dftcombine} and Lemma~\ref{th:realtimecorr}, we can get the incremental update equation for approximating pairwise correlation:  

\begin{equation}\label{eq:approxreal}
\begin{split}
  & 2 - Dist_{n}^{t+B}(\hat{X},\hat{Y})
  \underset{\text{= (when n =b)}}{\approx}  \\&\frac{1}{A~.~B}~\Big( n_s\sigma_x\sigma_y Dist_{n}^{t}(\hat{X},\hat{Y}) + \sigma_{x_{n_s+1}} \sigma_{y_{n_s+1}}  (1-\frac{d_{n_s+1})^2}{2})\\& - \sigma_{x_1} \sigma_{y_1} (1-\frac{d_1^2}{2})-\delta_{x_1} \delta_{y_1} {-} n_s \alpha_{x} \alpha_{y} {+} \delta_{x_{n_s+1}} \delta_{y_{n_s+1}}\Big) 
\end{split}
\end{equation}
Here, $Dist_{n}^{t+B} $ ($Dist_{n}^{t}(\hat{X},\hat{Y})$) is the DFT Distance of the query window at time $t+B$ ($t$) using first $n$ coefficients in each basic window. 
The new distance can be obtained by calculating the pairwise distances for the last basic window $d_{n_s+1}$. 

\subsection{{Complexity Analysis}} 
\label{sec:complexity}


{In this section, we discuss the complexity analysis of 
query/sketch time and space overhead of  \name, the DFT-based algorithm, and the baseline algorithm for non-arbitrary query windows. Next, we describe the synergies of time and space with  usability.   
Suppose $N$ is the number of time-series and each time-series is in length $L$. }

{{\bf Space Complexity} The space overhead of \name  
is $\psi = \frac{L}{B} (2+\frac{N(N-1)}{2})$, where $B$ is the basic window size and $\frac{L}{B}$ is the number of basic windows since we divide a time-series evenly by default. For each basic window of a time-series, \name stores two values for the mean and the standard deviation. In addition, for aligned basic windows of all pairs of time-series, \name stores the correlation of each pair of time-series. 
As a result, the space complexity of \name is $O(\frac{LN^2}{B})$. The DFT-based approximate algorithm  stores the mean and the standard deviation for basic windows of each time-series and the distance between the first few DFT coefficients of aligned basic windows of pairs of time-series, thus, has the space complexity of $O(\frac{LN^2}{B})$. We remark that this space overhead is in addition to the storage of raw time-series for both algorithms if the raw time-series are not discarded after sketching. } 

{{\bf Time complexity} The sketch time complexity of \name is independent of query window size and is $O(L \cdot N^2)$, since \name requires calculating statistics over the aligned basic windows of all pairs of time-series. The sketch time complexity of the approximate algorithm is worse than \name and is $O(L^2 \cdot N^2)$, since the calculation of DFT coefficients for a time-series of length $L$ is $O(L^2)$ and 
coefficients are required for calculating the distance of aligned basic windows in all pairs of time-series.  
For a query window size $l^*=n_s \cdot B$, both \name and the approximate algorithm scan all basic windows, therefore, the query time complexity of \name and the approximate method are both $O(\frac{l^*}{B} \cdot N^2)$. However, the baseline algorithm scans the raw time-series and has the query time complexity of $O(l^* \cdot N^2)$. } 

{The query time complexity of real-time \name  is $O(B^* N^2)$, where $B^*$ is the size of the new coming basic window since \name needs to compute statistics for the new window. The query time complexity of the  real-time approximate algorithm is $O(B^{*2} N^2)$. The query time complexity of real-time baseline algorithm is $O(L^* \cdot N^2)$, where $L^*$ is the size of the query window size.} 

{{\bf Usability Discussion} Let $M$ be the maximum space capacity available for the storage of time-series sketches. 
Considering the above space analysis and assuming equal-size basic windows, the minimum basic window size of \name can be calculated by solving $\frac{L}{B} (2+\frac{N(N-1)}{2})\leq M$. That is, with $M$ available storage the maximum basic window size handled by \name is $\frac{L}{M} (2+\frac{N(N-1)}{2})$. 
Note that both time and space complexity reduce as $B$ increases. Moreover, choosing a large $B$ means less space capacity requirement.   Therefore, should we just choose an extremely large $B$? The answer is no. When an  arbitrary query window  is not supported, a large $B$ will reduce the flexibility of query windows, thus, usability. For the case of the query window size being the integral multiple of the basic window size, the chosen query window size by users becomes extremely limited. 
If we consider the generic case of  Lemma~\ref{th:exactcorr}, we will observe a significant rise in query time, since the start/end of a query window can fall anywhere in a basic window, thus, when basic windows are large, the first and last basic windows can be potentially large. Suppose the query window is in length $l^*$, where $\exists n_s \in R$, such that $n_s \cdot B \leq l^* < (n_s+1) \cdot B $. The time complexity is $O((\frac{l^*}{B}+B) \cdot N^2)$. When $B>\sqrt{l}$, $\frac{l^*}{B}+B$ is monotonically increasing. Since $B>\sqrt{l}$ at the most meaningful queries, the query time increases when the $B$ increases for the generic method.}

\subsection{{Parallel and Disk-based \name}}
\label{sec:parallel}

{The disk-based \name stores sketches on the disk to be retrieved at query time for correlation calculation. 
Moreover, despite the quadratic complexity of the sketch time and query time, \name is embarrassingly parallelizable. The set of all pairs of time-series can be partitioned into groups that are processed in parallel. 
During sketching, workers are divided into a database worker, that writes statistics to the database, and computation workers, that perform sketch computation. 
Each worker sketches time-series pairs of a partition and sends the sketches in batches to the database worker to  write to a disk-based database. During the query time, each worker is assigned a partition, reads the sketches of time-series in batches directly from the  database and computes the pairwise correlations, and outputs a sub-matrix of the correlation matrix.}

{
To leverage data locality and minimize the number I/Os, for partitioning time-series pairs, \name adopts an approach similar to the parallel block nested loop join.  Each partition contains a subset of time-series paired with all time-series. i.e. each partition is a group of rows in a correlation matrix and the processing is done row by row in batches.  
Batches of pairs are assigned to a worker and once a worker is finished, it reads the statistics of the next batch of pairs from the database. 
Since Pearson's correlation is a symmetric measure, \name needs to process 
$n(n-1)/2$ pairs to construct the correlation matrix. 
For load balancing, \name assigns the same number of pairs to each worker. Note that the same architecture can be used to make the machinery described, in \S~\ref{sec:approx}, for correlation approximation. }

\subsection{Solutions to threshold correlation matrix}
Suppose that we know $c_{xz}$ and $c_{yz}$, and we want to infer the range of $c_{xz}$. Referring to ~\cite{glass1970geometric}, we have the upper bound and the lower bound of $c_{xz}$:
\begin{equation}\label{eq:bound}
\begin{split}
  c_{xz}c_{yz}-\sqrt{(1-c_{xz}^2)(1-c_{yz}^2)} \leq c_{xy} \leq  c_{xz}c_{yz}+\sqrt{(1-c_{xz}^2)(1-c_{yz}^2)}
\end{split}
\end{equation}
The equation above provides us a way in predicting the whole correlation matrix based on a small amount of correlations. For example, we are given $N$ time series and $c_{12}, \cdots ,c_{1N}$. We could make prediction on $c_{ij}$ for any $1 \leq i,j \leq N$ from $c_{1i}$ and $c_{1j}$ based on Equation~\ref{eq:bound}. For example, given a positive threshold $\theta$, let $U_{xy:z}$ be the upper bound of $c_{xy}$ estimated from $c_{xz}$ and $c_{yz}$, and $L_{xy:z}$ be the lower bound of $c_{xy}$ estimated from $c_{xz}$ and $c_{yz}$. If $L_{xy:z} \geq \theta$ or $U_{xy:z} \leq -\theta$, then we know that $m_{xy}=1$, where $m_{ij}$ is the $i$ th row and $j$ th column of the output matrix. If $L_{xy:z} \geq -\theta$ and $U_{xy:z} \leq \theta$, then we know that $m_{xy}=0$. 

\begin{algorithm}[t]
\caption{\pruning}\label{alg:pruning}
\begin{algorithmic}[1]
\Require {streams $\mathcal{L}=\{\x^1, \ldots, \x^N\}$;threshold $\theta$}
\Ensure {Matrix $M(m_{ij})$}
\State $m_{ij}=-\infty $
\For{\texttt{$i = 1,2, \cdots, N$}}
\If{$\exists p,q \textbf{s.t.} {m_{pq}} <0 $}
   \State $c_{i1}, \cdots ,c_{iN} \gets \textbf{Computecorr}(\mathcal{L},i)$
   \For{\texttt{$j = 1,2, \cdots, N$}}
        \For{\texttt{$k = 1,2, \cdots, N$}}
        \State {$L_{jk:i},U_{jk:i} \gets \textbf{Correct-Inference}(c_{ij},c_{ik})$}
        \If{$L_{jk:i} \geq \theta$ \texttt{or} $U_{jk:i} \leq -\theta$}
        \State $m_{jk} \gets 1$
        \EndIf
        \If {$L_{jk:i} \geq -\theta$ \texttt{and} $U_{jk:i} \leq \theta$}
        \State $m_{jk} \gets 0$
        \EndIf
        \EndFor
   \EndFor
\EndIf
\EndFor
\State $M \gets\textbf{Compute-Rest}$($M$)
\State \textbf{return} $M$
\end{algorithmic}
\end{algorithm}

\begin{figure}[!ht]
    \centering
    \includegraphics[width=\linewidth]{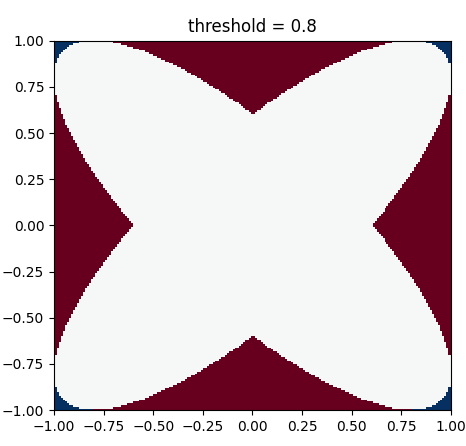}
    \vspace{-5mm}
    \caption{Correlation Inference}
    \label{fig:bound}
\end{figure}

In the Figure ~\ref{fig:bound}, let the horizontal axis be the $c_{xz}$, and the 
vertical axis be the $c_{yz}$. If $\theta$ is given to be $0.8$, then the colored regions are the cases that we could know the $m_{xy}$ without computing $c_{xy}$. In the blue regions, $m_{xy}=1$, and $m_{xy}=0$ in the red regions. The white region is the uncertainty part, we would need to change a $z$ (anchor) to see whether we could infer/compute further. The algorithm ~\ref{alg:pruning} presents how we can use it to prune. We select the anchor randomly (or let every time-series to be the anchor), then if we use the selected anchor to scan the $M$ and see which $m_{jk}$ we could know by the bounds instead of computing $c_{jk}$.

\section{Experiments}
\label{sec:experiments}

We have developed, in this paper, mathematical models and algorithms: \nethist and \netreal, for constructing and updating correlation matrices to build exact networks on historical and real-time data. {Our empirical evaluation has two parts.  First, we study these algorithms and compare their query time and sketch time against a baseline, on historical and real-time version of a climate data set.  
For these experiments, we use the in-memory version of the algorithms, i.e. in-memory data structures are used for storing raw data and sketches. 
Second, we evaluate the scalability and efficiency of the disk-based and parallel \name and the approximate algorithm as described in \S~\ref{sec:parallel}. 
For all experiments, we assume equal basic window sizes in time-series. All algorithms are implemented using Go language. We use PostgreSQL for storing data sketches.} 
{All experiments are conducted on a machine
with 2 Intel\textsuperscript{\textregistered} Xeon Gold 5218 @ 2.30GHz (64 cores), 512 GB DDR4 memory, a Samsung\textsuperscript{\textregistered} SSD 983 DCT M.2 (2 TB). }

{{\bf {\em NCEA Data Set}}\footnote{https://www.ncei.noaa.gov/pub/data/uscrn/products/hourly02/2020/} is a public data from the National Oceanic and Atmospheric Administration (NOAA). The data is collected every hour, and uploaded publicly in 24-hour increments. NOAA utilizes radiometric satellite collection, buoys, weather stations, citizen scientists, and other methods for perpetual data gathering. The data is collected from 157 nodes (time-series) across the US. Each node produces approximately 8,760 points of data in a year. 
This data set is used for in-memory experiments.}

{{\bf {\em Berkeley Earth Data Set}}\footnote{http://berkeleyearth.org/data/} is a collection of open-source data sets provided by an independent U.S. non-profit organization (Berkeley Earth). We use NetCDF-format gridded data from this data set. The climate data includes average temperature data on both lands and oceans. It divides the earth by 1$^{\circ}$ $\times$ 1$^{\circ}$ latitude-longitude grid. 
We consider the land time-series in this data set. 
The data set includes 18,638 nodes and each nodes contains 3,652 data points. The time resolution is 24 hours. This data set is used for scalability experiments.}

\subsection{Accuracy}
\label{sec:accuracy}
{We compared the accuracy of the climate network of NCEA data set, constructed based on the correlation matrix computed by the DFT-based techniques~\cite{ZhuS02,MueenNL10} (as described in \S~\ref{sec:bgapprox}) and exact calculation,  followed by the application of a threshold.}  
The approximate technique~\cite{ZhuS02,ColeSZ05} uses the first few DFT coefficients for estimating the distance of aligned basic windows, then, basic window distances are aggregated to obtain an approximation of the distance and correlation of time-series on a query window~\cite{ZhuS02}. 
In our experiments, we use the way, we believe, 
StatStream~\cite{ZhuS02} computes  
the distance (correlation) of query windows 
i.e. by averaging the distance (correlation) of DFT coefficients over all basic windows. 

We evaluate the impact of approximation on the  accuracy of constructed networks, using two measures:  
number of edges and the correlation similarity ratio, inspired by~\cite{napoletani2008reconstructing}. A correlation matrix is an $n \times n$ matrix, where $n$ is the number of time-series and a cell $c_{ij}$ is a binary value that indicates the correlation score of time-series $x^i$ and $x^j$ is higher than threshold $\theta$. The correlation similarity ratio evaluates  the percentage of identical edges in two networks. Formally, 
given two complex networks represented by
 adjacency matrices $A:\{a_{ij} \mid  0 \leqslant i , j  \leqslant n \}$ and $B :\{b_{ij} \mid  0 \leqslant i , j  \leqslant n\}$, the similarity ratio is defined as follows. 

\[D_p(A,B) =  \frac{2\sum_{i=1}^{n-1} \sum_{j=i+1}^{n} 1-\vert a_{ij}-b_{ij}\vert}{n(n-1)}\]

For instance, the correlation similarity ratio of networks with the adjacency matrices $A$ and $B$ is  $2/3$.
\[
A = 
\begin{pmatrix}
  1 & 1 & 0\\
  1 & 1 & 1\\
  0 & 1 & 1 \\
\end{pmatrix} 
\quad
B = 
\begin{pmatrix}
  1 & 0 & 0\\
  0 & 1 & 1\\
  0 & 1 & 1 \\
\end{pmatrix} 
\quad D_p(A,B) =  \frac{2}{3} \] 

\eat{
\begin{figure}[!ht]
    \centering
    \includegraphics[width=\linewidth]{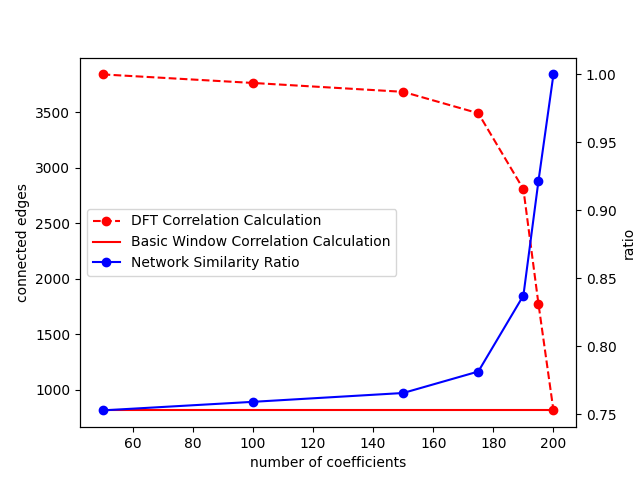}
    \caption{Accuracy Comparison.}
    \label{fig:accuracy}
\end{figure}
}

\begin{figure*}[!ht] 
    \begin{subfigure}[t]{0.245\linewidth}
        	\centering
        	\includegraphics[width =\textwidth]{figs/accuracy.png} 
        	\vspace{-5mm}\caption{}
            \label{fig:accuracy}
    \end{subfigure}
    \hfill
    \begin{subfigure}[t]{0.245\linewidth}
        	\centering
        	\includegraphics[width =\textwidth]{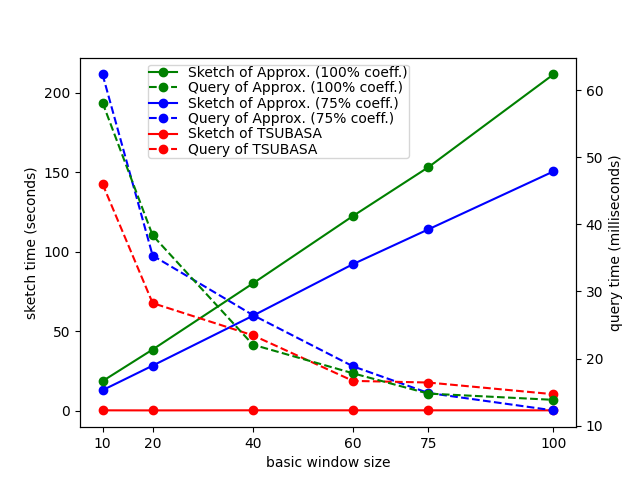} 
        	\vspace{-5mm}\caption{}
            \label{fig:constbasic}
    \end{subfigure}
    \hfill
    \begin{subfigure}[t]{0.245\linewidth}
        \centering
        	\includegraphics[width =\textwidth]{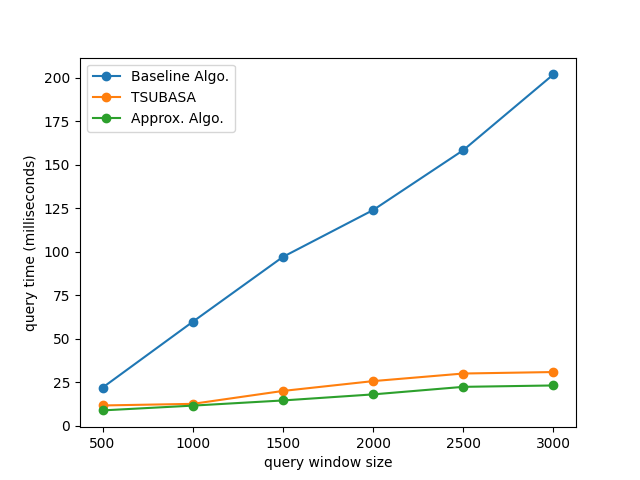} 
        	\vspace{-5mm}\caption{}
            \label{fig:constquery}
    \end{subfigure}
    \hfill
    \begin{subfigure}[t]{0.245\linewidth}
        \centering
        	\includegraphics[width =\textwidth]{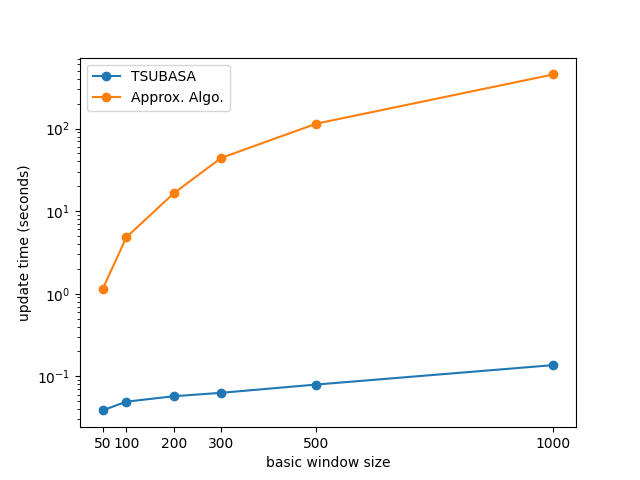} 
        	\vspace{-5mm}\caption{}
            \label{fig:update}
    \end{subfigure}
    \vspace{-3mm}
    \caption{In-memory: (a) Network Accuracy Comparison (b) Basic Window Size Analysis (c) Query Window Size Analysis (d) Network Update Time. }
    \label{fig:inmemory}
\end{figure*}

\eat{
\begin{figure*}[!ht] 
    \begin{minipage}[t]{0.33\linewidth}
        	\centering
        	\includegraphics[width =\textwidth]{figs/sketch_query_bw_go.png} 
        	\vspace{-5mm}\caption{Sketch and Query Time vs.\\Basic Window Size.}
            \label{fig:constbasic}
    \end{minipage}
    \hfill
    \begin{minipage}[t]{0.33\linewidth}
        \centering
        	\includegraphics[width =\textwidth]{figs/query_w_go.png} 
        	\vspace{-5mm}\caption{Query Time vs.\\Query Window Size.}
            \label{fig:constquery}
    \end{minipage}
    \hfill
    \begin{minipage}[t]{0.33\linewidth}
        \centering
        	\includegraphics[width =\textwidth]{figs/update_go.png} 
        	\vspace{-5mm}\caption{Network Update Time\\for Real-time Data.}
            \label{fig:update}
    \end{minipage}
\end{figure*}
}

For both techniques we consider the basic window size 200 and threshold 0.75, while varying the number of DFT coefficients from 50 to 200 for the approximate technique. 
Note that in the exact technique (basic window correlation), the correlation of time-series are computed by aggregating the correlation of basic windows as suggested by~\cite{ZhuS02}.  Therefore, the structure of this network (the solid red plot) is independent of the number of DFT coefficients. 

 As shown in Figure~\ref{fig:accuracy}, the number of edges in the network constructed by DFT correlation calculation  
becomes equal to the number of edges in the network constructed by exact calculation,   
only when all 200 coefficients are used. 
{This matches the theory, i.e. 
the approximation becomes identical to the exact calculation, when all DFT coefficients are used.}
Note that the approximate technique uses Equation~\ref{eq:corrdistthreshold} to find correlated time-series based on their   DFT-based distance. {Following this rule, the DFT correlation calculation never yields false negatives, but, creates false positive edges.} This explains why the number of edges in networks decrease as more coefficients are used.  
Moreover, the  similarity ratio of correlation matrices increases as the number of considered coefficients increases and is at its 
highest value when all coefficients are used to represent a basic window. 

The main take-away is that constructing a network based on the  approximation of DFT-based distance can lead to 
inaccurate networks. For climate data sets, near exact result is obtained only when a very large number of coefficients are used for approximation. This means smaller basic windows are preferred for approximation purposes which leads to higher number of basic windows, therefore, higher correlation calculation time in addition to the high DFT coefficient calculation time. 
These results highlight the necessity of efficient algorithms for constructing and updating exact networks on large collections of time-series. 

\subsection{{Efficiency}}
\label{sec:efficiency}

{We evaluate the efficiency of the in-memory version of correlation matrix calculation algorithms with respect to query window size and basic window size parameters. }

\textbf{Network Construction} We compare the sketch time plus query time when using the DFT-based approximation of StatStream with \name's exact correlations. 
For the approximation technique, we report on two scenarios: using all DFT coefficients and using 75\% of coefficients of a basic window. As shown in \S~\ref{sec:accuracy}, 
the former empirically yields a network similar to the network of exact correlation calculation.  
During the sketch time, \name calculates the statistics from Lemma~\ref{th:exactcorr}  and the  approximation algorithm calculates the statistics necessary for Equation~\ref{eq:dftcombine} for all basic windows of all  time-series. 
At query time, Lemma~\ref{th:exactcorr} and Equation~\ref{eq:dftcombine} are used to combine sketched statistics to get approximate and exact networks, respectively. 

Figure~\ref{fig:constbasic} reports the run time when varying the size of basic window for a query window of size {3,000}. 
{The sketch time of \name grows very gradually with the basic window size, while the sketch time of the approximate algorithm increases with 
the size of basic window.  This is because of the $O(n^2)$ complexity of DFT calculation.  }
Our results show that \name outperforms the approximation technique at sketch time and its query time is on par with the approximate network construction technique. 

Figure~\ref{fig:constquery} shows the query time of \name, approximate calculation, and a baseline, when varying the query window size, considering the constant basic window size of \fn{50}. 
The baseline algorithm computes the Pearson's correlation of Equation~\ref{eq:corr} for all pairs of time-series directly from raw data at query time  without any sketching. 
In this experiment the approximate algorithm uses 75\% of DFT coefficients of a basic window.    
Note that the distances of basic windows ($d_j$'s in Equation~\ref{eq:dftcombine}) are calculated during the sketch time, therefore,  the query time of the approximate algorithm does not depend on the number of considered DFT coefficients. 
\name  is almost as fast as the approximate algorithm
 for all query window sizes and outperforms  the baseline by two orders of magnitude. 
We remark that all algorithms have quadratic complexity in the number of time-series. 
However, the exact and DFT-based approximation are extremely efficient at computing  correlation of each pair at query time due to relying on  statistics that are pre-calculated during the sketch time.

\textbf{Network Update} We compare the network update time of \name with the  DFT-based approximation of \S~\ref{sec:approxreal} for real-time NCEA data set. The initial networks are constructed on the data set for a given query window. {Then, after the arrival of $B$ data points,  
both algorithms update the correlation and network using the special case of Lemma~\ref{th:realtimecorr} and Equation~\ref{eq:approxreal}, respectively.  Figure~\ref{fig:update} shows the  time taken by \name and approximate algorithm upon the arrival of $B$ new data points for various basic window sizes for a query window size of 3,000. }  
The approximate algorithm uses 75\% of DFT coefficients in a basic window.  
For both algorithms, updates to the network depend on the statistics of the first basic window of the current query window, which is already calculated, and the most recently observed basic window, which needs to be calculated on the fly.   
Indeed, the efficiency of update only depends on the processing of the most recently observed basic window. 
Since the approximation algorithm needs to calculate DFT coefficients we observe that it is slower than \name at least one order of magnitude. The gap between the two algorithms becomes more obvious for larger basic window sizes because of the $O(n^2)$ complexity of DFT calculation. In conclusion, \name can compute exact correlation and networks for real-time much faster than the approximation  competitor.   

\begin{figure*}[!ht] 
    \begin{subfigure}[t]{0.245\linewidth}
        	\centering
        	\includegraphics[width =\textwidth]{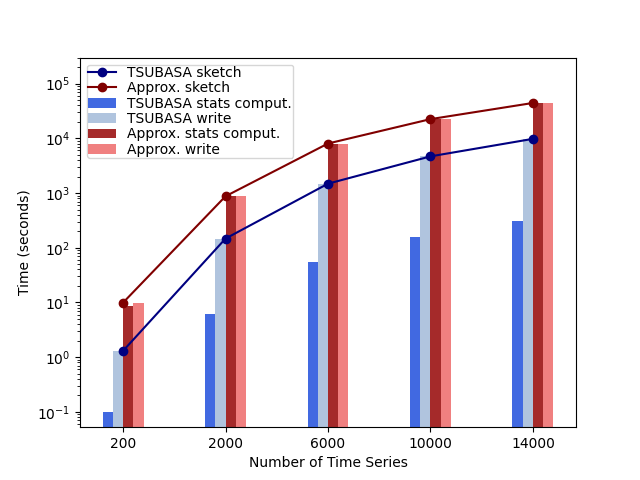} 
        	\vspace{-5mm}\caption{}
            \label{fig:parallelsketch}
    \end{subfigure}
    \hfill
    \begin{subfigure}[t]{0.245\linewidth}
        	\centering
        	\includegraphics[width =\textwidth]{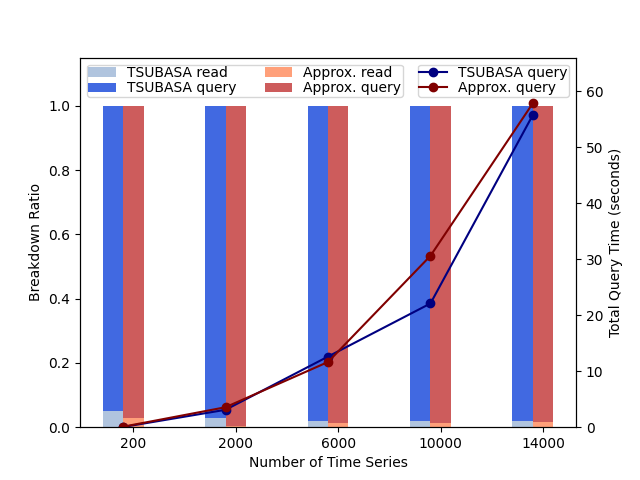} 
        	\vspace{-5mm}\caption{}
            \label{fig:parallelquery}
    \end{subfigure}
    \hfill
    \begin{subfigure}[t]{0.245\linewidth}
        \centering
        	\includegraphics[width =\textwidth]{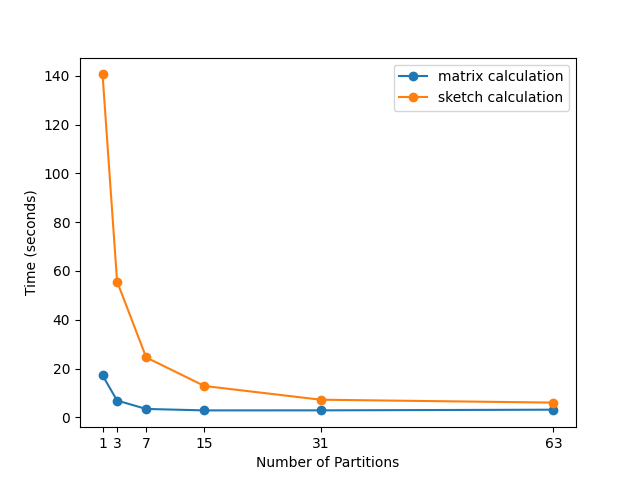} 
        	\vspace{-5mm}\caption{}
            \label{fig:scalability}
    \end{subfigure}
    \hfill
    \begin{subfigure}[t]{0.245\linewidth}
        \centering
        	\includegraphics[width =\textwidth]{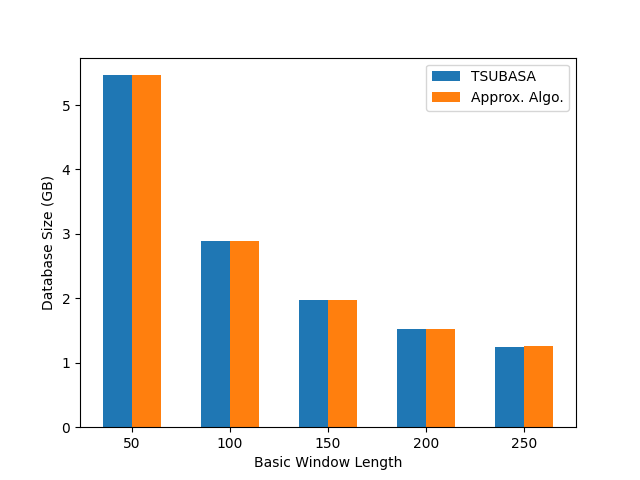} 
        	\vspace{-5mm}\caption{}
            \label{fig:space}
    \end{subfigure}
    \vspace{-3mm}
    \caption{Parallel and Disk-based: (a) Sketch Time Breakdown (b) Query Time Breakdown (c) Impact of Parallelization (d) Space Overhead. }
    \label{fig:paralleldisk}
\end{figure*}

\subsection{{Scalability}}
\label{sec:scalability}

{We compare \name and the approximation algorithm in similar parallel and disk-based configurations.   To separate the impact of fine-tuning the database on  performance, in all experiments, we choose to use one database worker and allocate the rest of workers for sketching and querying. For the scalability experiments, we use subsets of time-series from the Berkeley Earth data set. All experiments consider a basic window length of 120,  a query window length of 960, and 75\% of DFT coefficients for correlation approximation.}

{\textbf{Sketch Time} Figure~\ref{fig:parallelsketch} shows  the sketch time of \name and the approximate algorithm for correlation matrix calculation for various number of time-series on 63 partitions and 64 cores.  
The plot separates the write time from the sketch calculation time. We observe that \name outperforms the approximate algorithm in sketch time. This is due to quadratic complexity of DFT calculation as opposed the linear complexity of computing \name sketches. 
We observe that the majority of work by \name during sketching is spent on writing sketches to a database, unlike matrix approximation which is on par with the write time. Note that in this configuration the total sketch time of \name and the approximate algorithm is bounded by database write time.    
The total sketch time, sketch calculation, and write time of \name and the approximate algorithm increase  quadratically with the number of time-series.  However, due to parallelization, the growth is slower than what is expected for a single-core configuration. }

{\textbf{Query Time} Figure~\ref{fig:parallelquery}  shows the query time of \name and the approximate algorithm for correlation matrix calculation for various number of time-series on 63 partitions and 64 cores.  
The plot separates the database read time from the correlation matrix calculation time. Both \name and the approximate algorithm have on par query time and take less than a minute for computing the correlation matrix even for the largest number of time-series. We observe that the read time during querying is negligible compared to matrix calculation. The read time percentage is slightly higher for smaller networks due to the database overhead compared to matrix calculation cost on small number of time-series. 
The total query time, matrix calculation, and read time of \name and the approximate algorithm increase  quadratically with the number of time-series.  However, due to parallelization, the growth is slower than what is expected for a single-core configuration. }

{\textbf{Impact of Number of Partitions}  Figure~\ref{fig:scalability} shows how \name scales with the number of partitions. For these experiments, we use 2,000 time-series. Note than we have 63 partitions/cores for sketch/matrix computation and we reserve one core for database writes. Both sketch and matrix calculation times decrease with the increase in the number of cores. We expect further optimization of query and sketch time can be done by fine-tuning the database and allocating further resources. }

{\textbf{Space Overhead} Figure~\ref{fig:space} shows the size of databases used for storing sketches of 2,000 time-series by \name and the approximate algorithm with respect to basic window size. Both algorithms store sketches of the the same size for each basic window and have the same space overhead. As the size of basic window increase the number of basic windows decreases and total size of sketches stored by both algorithms decrease. }

\eat{
\begin{figure*}[!ht] 
    \begin{minipage}[t]{0.33\linewidth}
        	\centering
        	\includegraphics[width =\textwidth]{figs/parallel_sketch.png} 
        	\vspace{-5mm}\caption{\fn{Sketch Time Breakdown.}}
            \label{fig:parallelsketch}
    \end{minipage}
    \hfill
    \begin{minipage}[t]{0.33\linewidth}
        \centering
        	\includegraphics[width =\textwidth]{figs/parallel_query.png} 
        	\vspace{-5mm}\caption{\fn{Query Time Breakdown.}}
            \label{fig:parallelquery}
    \end{minipage}
    \hfill
    \begin{minipage}[t]{0.33\linewidth}
        \centering
        	\includegraphics[width =\textwidth]{figs/time-cores.png} 
        	\vspace{-5mm}\caption{\fn{Impact of Number of Cores.}}
            \label{fig:scalability}
    \end{minipage}
\end{figure*}

\begin{figure}[!ht]
    \centering
    \includegraphics[width=\linewidth]{figs/space.png}
    \caption{Space Overhead.}
    \label{fig:space}
\end{figure}
}

\section{Related Work}
\label{sec:related}

{\bf Spatio-Temporal Databases}  represent the value of a climate variable with three dimensions of geometry and time (i.e. latitude, longitude, and timestamp). 
Systems such as Microsoft StreamInsight~\cite{AliCRK10}, GeoMesa~\cite{geomesa}, and IBM PAIRS Geoscope~\cite{klein2015pairs} are designed for processing streams of geospatial data,  from sources such as satellites and  IoT sensors, 
benefiting from relational DBMSs, distributed column-oriented databases, and scalable key-value data stores. 
The algorithms and mathematical models we designed in this paper can be incorporated into geo-spatial systems. 
We use the progressive and declarative processing of  Trill~\cite{ChandramouliGBT15} for storage and analysis. \name can be a stand-alone system with other network analysis extensions for clustering and community detection.  

\eat{Database community has done extensive research in time-series similarity search. 
In particular, threshold queries for similarity search in time-series databases consider calculating a function over multiple  streams and alerting when the threshold is crossed~\cite{CaiB0C21,GogolouTEBP20,MirylenkaDP17}. For example, Cai et al. show that for quasiconvex functions, it is sufficient to retain  very few tuples per-time-series for any given query  window and never miss the event when functions does not satisfy threshold requirements~\cite{CaiB0C21}. 
For the nearest neighbor search problem, Gogolou et al.  propose a probabilistic learning-based method with quality guarantees for progressive query answering~\cite{GogolouTEBP20}.  
The existing time-series similarity search problems in the community have mostly focused on identifying similar time-series to a query time-series. 
Whereas, climate network construction requires all-pair time-series similarity calculation, independent of correlation threshold. 
In fact, constructing a complete  correlation matrix with exact correlations is a strong tool for climate network science. }

\noindent{\bf Similarity Search on Time-Series} To compute the similarity between time-series, 
several measures have been proposed~\cite{MirylenkaDP17,PaparrizosLEF20}. In \name, we consider the Pearson's correlation coefficient as it is the most commonly used measure for building climate networks~\cite{Faghmous:2014,Tsonis2004}. 
\eat{Threshold queries for similarity in time-series databases often involve calculating a function over two or more streams and reporting when the threshold is crossed. There has been extensive work from the database  community on this topic~\cite{CaiB0C21,GogolouTEBP20,MirylenkaDP17}. }
There has been extensive work from database community on similarity search of time-series~\cite{CaiB0C21,GogolouTEBP20,MirylenkaDP17,ZhuS02,qiu2018learning}.  
This line of work considers threshold queries for similarity search in time-series databases and often involve calculating a function over two or more streams and reporting when the threshold is crossed. 
\eat{For example, Cai et al. show that for quasiconvex functions, it is sufficient to retain  very few tuples per-time-series for any given query  window and never miss the event when functions does not satisfy threshold requirements~\cite{CaiB0C21}. 
For the nearest neighbor search problem, Gogolou et al.  propose a probabilistic learning-based method with quality guarantees for progressive query answering~\cite{GogolouTEBP20}. } Time-series similarity search problems in the community have mostly focused on identifying similar time-series to a query time-series. 
\name, however, focuses on the construction of the complete and exact  correlation matrix, a task that requires all-pair correlation calculation.

{\bf Sketching and Data Reduction}  Alternative techniques to DFT for time-series similarity approximation are Discrete Wavelet Transform (DWT)~\cite{ChaovalitGKC11}, Singular Value Decomposition~\cite{JangCJK18}, and Piecewise Constant Approximation~\cite{LuoYCLFHM15}. 
\draco{Data reduction is based on the idea of summarizing the population or sample data through smaller-sized matrices or simple numbers~\cite{khattree2000multivariate}. As for time-series, it has been a topic of interest that reduces data into low-dimensional data while preserving its characteristics to a large extent~\cite{zhao2006high}, which has  IoT  applications~\cite{papageorgiou2015real}. }
\name sketches data into statistics that are required for the efficient and exact calculation of correlation scores and networks. 

{\bf Data Streaming Systems} 
\jinshu{Most data streaming systems, including S4~\cite{NeumeyerRNK10}, Muppet~\cite{LamLPRVD12}, Spark~\cite{ZahariaXWDADMRV16}, and Flink~\cite{CarboneKEMHT15} are large-scale data processing systems driven by th Reduce programming model. They can work on different kinds of data coming from real-world sensors or IoT devices. The system we used in the experiments, Trill, also has a distributed version called  Quill~\cite{ChandramouliFGE16}. The distributed design will definitely reduce the latency of data processing, and also increase the throughput in a given time.} 

\section{Discussion and Conclusion}
\label{sec:conclusion}

We presented \name, an efficient and exact correlation matrix calculation  algorithm for climate network construction on historical and real-time data.  
\name uses the basic-window model to subdivide a query time-window into smaller windows. \name computes a cheap and simple sketch of basic windows and reuses them at query time for building networks on arbitrary query windows.  
We describe a way of approximating  time-series correlation and compared it with \name. Experiments show that \name can compute exact correlation and network  faster than DFT-based approximation techniques. {The techniques proposed in TSUBASA can be potentially applied to analyzing stock market data~\cite{memon2019structural} as well as biological data~\cite{batushansky2016correlation}. However, our preliminary investigation shows that approximate and partial calculation of pairwise correlations suffices in scenarios such as stock market data~\cite{memon2019structural}.
For future work, we plan to extend our problem definition to  unaligned time-series, develop a pairwise correlation  pruning algorithm based on a threshold, and  consider further optimization of the parallel \name by fine-tuning.} 



\newpage

\bibliographystyle{ACM-Reference-Format}
\bibliography{ref}

\end{document}